\spnewtheorem{obs}[theorem]{Observation}{\bfseries}{\itshape}
\spnewtheorem{res}[theorem]{Restriction}{\bfseries}{\itshape}
\spnewtheorem*{thm}{Theorem}{\bfseries}{\itshape}
\newcommand{\T}{\mathcal{T}}
\newcommand{\A}{\mathcal{A}}
\newcommand{\B}{\mathcal{B}}
\newcommand{\C}{\mathcal{C}}
\newcommand{\bl}[1]{\mathbf{#1}}
\newcommand{\U}{\mathcal{U}}
\newcommand{\Su}{\mathcal{S}}
\newcommand{\X}{\mathcal{X}}
\newcommand{\su}{\mathbb{S}}
\begin{document}


\title{Lower Bounds for Restricted Schemes in the
	Two-Adaptive Bitprobe Model}

\author{Sreshth Aggarwal \and Deepanjan Kesh \and Divyam Singal}

\institute{Indian Institute of Technology Guwahati, Guwahati, Assam, India}

\maketitle


\begin{abstract}

In the adaptive bitprobe model answering membership queries in two
bitprobes, we consider the class of restricted schemes as introduced
by Kesh and Sharma~\cite{kesh:dam}. In that paper, the authors
showed that such restricted schemes storing subsets of size 2 require
$\Omega(m^\frac{2}{3})$ space. In this paper, we generalise the result
to arbitrary subsets of size $n$, and prove that the space required
for such restricted schemes will be
$\Omega(\left(\frac{m}{n}\right)^{1 - \frac{1}{\lfloor n / 4 \rfloor +
2}})$.

\end{abstract}

				\section{Introduction}

In the bitprobe model, we store subsets $\Su$ of size $n$ from an
universe $\U$ of size $m$ in a data structure taking $s$ amount of
space, and answer membership queries by reading $t$ bits of the data
structure. The conventional notation to denote such schemes is as an
$(n, m, s, t)$-scheme. Each such scheme has two components -- the {\em
storage scheme} sets the bits of the data structure according to the
given subset $\Su$, and the {\em query scheme} probes at most $t$ bits
of the data structure to answer membership queries. Schemes are
categorised as {\em adaptive} if the location of each bitprobe in
their query scheme depends on the answers obtained in the previous
bitprobes. If the location of the bitprobes in the query scheme is
independent of the answers obtained in the earlier bitprobes, the
corresponding scheme is called {\em non-adaptive}. For further reading
about bitprobe and other related models and their associated results,
Nicholson {\it et al.}~\cite{survey} has quite a detailed survey of
the area.

In this paper, we restrict ourselves to those bitprobe schemes that
answer membership queries using two adaptive bitprobes, i.e.\ $t=2$.
The data structure of such schemes can be thought of as having 3
tables, namely $\A, \B,$ and $\C$. The first bitprobe is made in table
$\A$, and if the bit probed in table $\A$ has been set to 0 the next
bitprobe is made in table $\B$. On the other hand, the second bitprobe
is made in table $\C$ if the bit queried in table $\A$ has been set to
1. The answer to the membership query is ``Yes'' if the second
bitprobe returns 1, ``No'' otherwise.

The best known scheme for storing subsets of size two and answering
membership queries using two adaptive bitprobes is due to
Radhakrishnan {\it et al.}~\cite{radha:raman} which takes
$O(m^\frac{2}{3})$ amount of space; the best known lower bound for the
problem is $\Omega(m^\frac{4}{7})$~\cite{radha:saswata}. Though the
problem is yet to be settled for subsets of size two, it has recently
been shown for subsets of size three that the space required is
$\Theta(m^\frac{2}{3})$~\cite{mirza:walcom,kesh:fsttcs}. Garg and
Radhakrishnan~\cite{mohit:few} proved that for arbitrary sized
subsets, the space bounds for two adaptive bitprobe schemes are
$\Omega(m^{1-\frac{1}{\lfloor n / 4 \rfloor}})$ and $O(m^{1 -
\frac{1}{4n+1}})$, where $n \leq c \cdot \log m$.

				\section{Restricted Schemes}

Kesh and Sharma~\cite{kesh:dam} proved that $\Omega(m^\frac{2}{3})$
is indeed the lower bound for two adaptive bitprobe schemes storing
subsets of size two, albeit for a restricted class of schemes. We now
introduce the restriction that characterises this class of schemes.

In the literature, elements of the universe $\U$ that query, or
equivalently map to, the same bit in table $\A$ are said to form a
{\em block}. We label the elements of a block uniquely as 1, 2, 3,
$\dots$, which we will refer to as the {\em index} of the element
within a block. The element with index $i$ of a block $\bl{a}$ will be
denoted as $\bl{a}_i$. Elements of $\U$ that query, or map to, the
same bit in tables $\B$ or $\C$ form a {\em set}. This departure in
labels is made to distinguish the collections of elements in tables
$\B$ and $\C$ from those of table $\A$, which will prove useful
henceforth. The set to which the element $\bl{a}_i$ belongs to in
table $\B$ will be denoted as $S_\B(\bl{a}_i)$; similarly for sets of
table $\C$.

We impose the following restriction on the schemes designed to store
subsets $\Su$ and answer membersip queries using two bitprobes.

\begin{res}

	If two elements belong to the same set either in table $\B$ or
	in table $\C$, then their indices are the same.

	\label{res:index}

\end{res}

To take an example, in the schemes that we consider if $\bl{a}_i \in
S_\C(\bl{b}_j)$, then it must be the case that $i = j$. We further
simplify our premise by imposing the following restrictions on the
schemes we are addressing. They are being made for the sake of
simplicity and do not affect the final result.

\begin{res}

	Our class of schemes satisfy the following constraints.

	\begin{enumerate}

		\item The three tables $\A, B,$ and $\C$ do not share
			any bit.

		\item All the three tables are of the same size.

		\item All the blocks in table $\A$ are of equal size.
			Let that size be $b$.

		\item There are no singleton sets in tables $\B$ and
			$\C$.

		\item All of the sets in the tables $\B$ and $\C$ are
			{\em clean}~\cite{kesh:fsttcs}, i.e.\ no two
			elements of a block belong to the same set.

	\end{enumerate}

	\label{res:simple}

\end{res}

As discussed in Section 1.5 of~\cite{kesh:dam}, the motivation for these kind of restrictions is from the schemes presented in such works as by
Radhakrishnan {\it et al.}~\cite{radha:raman}, Lewenstein {\it et
al.}~\cite{munro}, and Radhakrishnan {\it et
al.}~\cite{radha:saswata}. The final restriction is motivated from
Section~3 of Kesh~\cite{kesh:fsttcs}, where it is shown that any
scheme can be converted to a scheme with only clean sets with no
asymptotic increase in the size of the data structure.

The main result of the paper (Theorem~\ref{thm:lower}) is as follows.

\begin{thm}

	Two adaptive bitprobe schemes for storing subsets of size at most $n$ and satisfying Restriction~\ref{res:index} require $\Omega(\left(\frac{m}{n}\right)^{1 - \frac{1}{\lfloor n / 4 \rfloor + 2}})$ space.

\end{thm}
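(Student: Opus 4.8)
The plan is to prove the theorem in two phases: a \emph{normalisation} phase that reduces any scheme obeying Restriction~\ref{res:index} to one on which the counting argument is clean, changing the space $s$ and the universe size $m$ by at most constant factors, followed by a \emph{counting} phase that establishes the bound. It is the normalisation phase that upgrades a statement about conveniently structured schemes to the theorem as worded, and it is precisely the content of the remark in the text that the extra constraints ``do not affect the final result.'' Since the target bound $\Omega((m/n)^{1-1/(\lfloor n/4\rfloor+2)})$ is polynomial in $m$, it is insensitive to constant-factor changes in both $m$ and $s$, so combining the two phases is immediate once each is in place.

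For the normalisation phase the crucial observation is that cleanliness (clause~5 of Restriction~\ref{res:simple}) is \emph{automatic} rather than assumed: under Restriction~\ref{res:index} every set of $\B$ or $\C$ is index-homogeneous, and since the elements of a single block carry pairwise distinct indices, no set can hold two elements of one block --- which is exactly cleanliness. Thus the property my earlier argument leaned on is in fact a consequence of Restriction~\ref{res:index} and needs no separate hypothesis. Clauses~1 and~2 (disjoint, equal-sized tables) are arranged by taking three private copies and padding with dummy bits, at the cost of a single constant factor and without disturbing Restriction~\ref{res:index}. The remaining clauses~3 and~4 I would not impose outright but rather \emph{absorb into the counting phase}, arguing that the adversary can succeed using only the regularity that is freely available.

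Concretely, I may assume $s < m/4$, as otherwise $s$ already dominates the target bound. Then at most $s$ sets of $\B$ and $s$ of $\C$ are singletons, so at most $2s < m/2$ elements lie in a singleton set; the counting phase will only ever route constraints through non-singleton sets, leaving $\Omega(m)$ usable elements, which disposes of clause~4 with no deletion cascade. For clause~3, writing $b = m/s$ for the average block size, a counting argument shows fewer than $m/2$ elements lie in blocks smaller than $b/2$, so $\Omega(m)$ elements sit in blocks of size at least $m/(2s)$; I would design the adversary to use only this uniform lower bound on block sizes, so that exact equality never enters and no logarithmic factor is lost. With these reductions in hand the counting phase runs a Garg--Radhakrishnan style adversary~\cite{mohit:few}, refined for index-homogeneous sets: over $\lfloor n/4\rfloor + 2$ rounds it builds a subset of size at most $n$ that the scheme cannot store unless $s$ is large, each round spending a bounded slice of the element budget and forcing, by a pigeonhole over the $s$ available sets, a multiplicative increase in the number of sets the scheme must commit. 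Unrolling this recursion and matching the budget gives $s = \Omega((m/n)^{1-1/(\lfloor n/4\rfloor+2)})$, the index restriction being exactly what buys two rounds past the unrestricted bound $\Omega(m^{1-1/\lfloor n/4\rfloor})$.

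The main obstacle is the counting phase. Proving that each round genuinely forces multiplicative growth, and that a full $\lfloor n/4\rfloor + 2$ rounds can be sustained on a budget of $n$ elements, demands a careful invariant tracking which sets and which $\A$-bits remain uncommitted, together with a pigeonhole guaranteeing that a too-small $s$ strands two elements that require opposite bits inside one common set. The genuinely new part is extracting \emph{exactly} two extra rounds from the index-homogeneity of the sets, which is the quantitative origin of the $+2$ in the denominator and the only place Restriction~\ref{res:index} is used beyond supplying cleanliness; getting this constant right, rather than merely recovering the order of the unrestricted bound, is where I expect the real work to lie.
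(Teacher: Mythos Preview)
Your normalisation phase is fine --- in particular the observation that cleanliness (clause~5 of Restriction~\ref{res:simple}) is a direct consequence of Restriction~\ref{res:index} is correct and worth recording --- but normalisation is not where the content lives, and the paper essentially stipulates Restriction~\ref{res:simple} and moves on.

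The genuine gap is the counting phase, which you openly leave as a hope (``a Garg--Radhakrishnan style adversary, refined for index-homogeneous sets, over $\lfloor n/4\rfloor + 2$ rounds''). The paper does \emph{not} run a round-based adversary. Its argument is structural: it defines the $i$-Universe $\U^i_\T(\bl{e}_k)$ recursively (Definition~\ref{def:univ}) and calls $\bl{e}_k$ \emph{$i$-bad} if two members of some $\U^j_\T(\bl{e}_k)$, $j\le i$, share a set in the appropriate table. Lemma~\ref{lem:bad} shows that an $i$-bad element yields a hard pair $(\Su,\X)$ of size $\le 2i$ each blocking its block from $\T$; Restriction~\ref{res:index} is used here essentially, since index monotonicity along paths (Lemma~\ref{lem:index}) is what keeps $\Su$ and $\X$ disjoint. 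Then Lemma~\ref{lem:good} observes that $\bl{e}_1$ cannot be $i$-bad for $\B$ while $\bl{e}_{i+2}$ is $i$-bad for $\C$, because the two obstructions sit on disjoint index windows $[1,i{+}1]$ and $[i{+}2,2i{+}2]$ and combine into a single instance of size $\le 4i$ blocking both tables. Partitioning the blocks accordingly, swapping $\B''\leftrightarrow\C''$ on one part, and permuting indices (Lemma~\ref{lem:label}) produces a modified scheme in which \emph{every} index-$1$ element is $i$-good (Lemma~\ref{lem:modified}), at the cost of doubling the space. Goodness forces the elements of each $t$-Universe into distinct sets, and a Cauchy--Schwarz estimate (Lemmas~\ref{lem:univ_size} and~\ref{lem:size}) gives $\sum_{j\le t+1}(|\B_j|+|\C_j|)\ge c\,s^{t/(t+1)}$; averaging over the $b=m/s$ index classes with $t=i=\lfloor n/4\rfloor$ yields the bound. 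The ``two extra rounds'' you are looking for are not bought by a sharper adversary step: they come from the fact that index-homogeneity makes the good/bad dichotomy and the disjoint-window combination in Lemma~\ref{lem:good} possible. Without an analogue of that dichotomy and the scheme-modification step, your outline is not yet a proof.
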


In this restricted setting, as one might expect, the lower bound
of Theorem~\ref{thm:lower} improves upon the bound proposed by Garg
and Jaikumar~\cite{mohit:few} for all schemes, which is
$\Omega(m^{1-\frac{1}{\lfloor n / 4 \rfloor}})$ for $n \leq c \cdot \sqrt{\frac{\log m}{\log n}}$, and the comparison can be found in Section~\ref{sec:lower}. To generalise the
proof presented, Lemma~\ref{lem:index}, which shows that indices
increase as the $i$ in $i$-Universe (Definition~\ref{def:univ}) increases, and
Lemma~\ref{lem:bad}, which works because the subsets $\Su$ and $\X$ (defined in Section~\ref{sec:premise})
are disjoint, need to be proven for the generalised setting. Other
lemmas, including those of Section~\ref{sec:lower}, lend itself to
generalisation without much effort.

				\section{Premise}

\label{sec:premise}

As mentioned earlier, the subset of the universe $\U$ that we want to
store in our data structure will be referred to as $\Su$. In the
subsequent discussion, it will be necessary to build certain subsets
that we would like to store in the data structure of the restricted
schemes and, consequently, arrive at certain contradictions -- such
subsets will be denoted at various places as $\Su, \Su', \Su'', \Su_1,
\Su_2,$ etc. As we build the subsets to store, it will also be
required to keep track of certain elements that cannot be part of
$\Su$ -- such subsets will be denoted as $\X, \X', \X'', \X_1, \X_2,$
etc.

As Restriction~\ref{res:index} forces sets in tables $\B$ and $\C$ to
contain elements of only a certain index, it will prove helpful to
refer to the various structures in the two tables by their indices. To
start with, $\U_i$ will denote those elements of $\U$ which have index
$i$. $\B_i$ will refer to the collection of all sets of table $\B$
comprised of elements of $\U_i$; similarly $\C_i$. Sometimes, we will
also use $\T$ and $\T'$ to refer to either of the tables $\B$ or $\C$.
Hence, if we have two distinct tables $\T$ and $\T'$, then one of them
will be $\B$ and the other $\C$, which is not important. On
the other hand, table $\A$ will always be referred to by its name.

In the literature the size of the data structure has always been
denoted by $s$. As the sizes of the three tables are equal
(Restriction~\ref{res:simple}), we would instead use $s$ to denote the
size of any particular table; this would alleviate the need of using
the fraction $\frac{s}{3}$ whenever we refer to the table sizes. So,
the schemes will henceforth be referred to as an $(n, m, 3 \cdot s,
2)$-schemes.

In the following two results, we present some self-evident and one
essential property of the notations as defined above. They will be
referenced to later, as needed.

\begin{obs}

	The size $s$ of a table and the elements of index $i$ are
	related as follows.

	\begin{enumerate}

		\item $|\A| = |\B| = |\C| = |\U_i| = s.$

		\item $\U_i = \bigcup\limits_{S \in \B_i} S$.

	\end{enumerate}

	\label{obs:prop}

\end{obs}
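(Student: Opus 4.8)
The plan is to treat the two items separately, drawing item~1 almost entirely from Restriction~\ref{res:simple} and item~2 from Restriction~\ref{res:index}. Throughout I would make explicit at the outset the standing assumption that the maps from $\U$ into the three tables are total (every element queries a bit of $\A$, and then a bit of $\B$ or $\C$ depending on the answer) and that every bit of table $\A$ is the target of some element; an unused bit of $\A$ could be deleted outright, shrinking the table, so this is without loss of generality.

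For item~1, the three equalities $|\A| = |\B| = |\C| = s$ are immediate, since Restriction~\ref{res:simple} stipulates that the three tables share a common size and we have named that size $s$. The only substantive point is $|\U_i| = s$. First I would note that the number of blocks equals $|\A| = s$: each block is the preimage of a distinct bit of $\A$ under the first bitprobe, and by the standing assumption every bit of $\A$ is hit. By Restriction~\ref{res:simple} every block has exactly $b$ elements, and by the labelling convention these are indexed uniquely as $1, 2, \dots, b$; hence each block contains exactly one element of index $i$ for every $i$ in $\{1, \dots, b\}$. Counting the index-$i$ elements block by block then yields $|\U_i| = (\text{number of blocks}) = s$. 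As a byproduct this records the identity $m = |\U| = b \cdot s$, which is convenient to note for later use.

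For item~2, I would use that the first bitprobe, followed by a probe into table $\B$ when it returns $0$, assigns every element of $\U$ to exactly one set of table $\B$; thus the sets of $\B$ partition $\U$. Here Restriction~\ref{res:index} is the essential ingredient: it forces each set of $\B$ to be homogeneous in index, consisting entirely of elements of a single index. Therefore $\B_i$, the collection of those sets composed of elements of $\U_i$, captures exactly the index-$i$ part of this partition. The equality $\U_i = \bigcup_{S \in \B_i} S$ then follows from two inclusions: any element of index $i$ lies in some set of $\B$, which by homogeneity must belong to $\B_i$, giving $\U_i \subseteq \bigcup_{S \in \B_i} S$; and conversely every element of a set in $\B_i$ has index $i$ by the definition of $\B_i$, giving the reverse inclusion. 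I do not anticipate a real obstacle, as the statement is structural rather than quantitative; the only point demanding care is the totality/surjectivity assumption flagged above, after which everything reduces to the definitions of blocks, sets, and the collections $\B_i$ together with Restrictions~\ref{res:index} and~\ref{res:simple}.
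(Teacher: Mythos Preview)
Your proposal is correct and follows essentially the same approach as the paper: the paper's own proof simply notes that item~1 holds because each block of $\A$ contains exactly one element of any given index, and that item~2 follows from the definition of $\B_i$. Your argument spells out the same reasoning with more care (making explicit the surjectivity assumption and the role of Restriction~\ref{res:index} in forcing index-homogeneity of sets), but the underlying ideas are identical.
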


\begin{proof}

	The first of the two observations follows from the fact that
	each block of table $\A$ has exactly one element of any
	particular index. The second observation follows from the
	definition of $\B_i$. \qed

\end{proof}

\begin{lemma}

	The correctness of a scheme remains unaffected under a
	permutation of the indices.

	\label{lem:label}

\end{lemma}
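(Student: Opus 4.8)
\emph{Proof proposal.}
The plan is to observe that permuting the indices is nothing more than relabeling the elements of $\U$ within their blocks, and that every ingredient of a restricted scheme is invariant under such a relabeling. First I would make precise what a permutation of the indices is: a bijection $\pi \colon \{1, \dots, b\} \to \{1, \dots, b\}$ applied \emph{uniformly} across all blocks, inducing the block-preserving bijection $\sigma \colon \U \to \U$ with $\sigma(\bl{a}_i) = \bl{a}_{\pi(i)}$ for every block $\bl{a}$. The uniformity is what I would stress: a permutation allowed to vary from block to block could destroy Restriction~\ref{res:index}. Note also that it is the element labels that move under $\sigma$, not the bit positions of the tables.

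Next I would show that correctness is preserved. A scheme is specified by the three maps assigning to each element of $\U$ a bit of $\A$, of $\B$, and of $\C$, together with the adaptive query procedure, none of which refers to the index of an element. Given a correct restricted scheme $\mathcal{D}$, I would define $\mathcal{D}_\pi$ by: to store a subset $\Su$, run the storage procedure of $\mathcal{D}$ on $\sigma(\Su)$; to answer a query on an element $x$, run the query procedure of $\mathcal{D}$ on $\sigma(x)$. Since $\sigma$ is a bijection, $x \in \Su$ if and only if $\sigma(x) \in \sigma(\Su)$, and $|\sigma(\Su)| = |\Su| \le n$, so the correctness of $\mathcal{D}$ on the subsets $\sigma(\Su)$ yields the correctness of $\mathcal{D}_\pi$ on every subset $\Su$ of size at most $n$. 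This $\mathcal{D}_\pi$ is exactly the scheme obtained from $\mathcal{D}$ by renaming $\bl{a}_i$ as $\bl{a}_{\pi(i)}$ throughout.

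Finally I would verify that $\mathcal{D}_\pi$ still lies in the restricted class. For Restriction~\ref{res:index}: if two elements share a set of table $\B$ or $\C$ in $\mathcal{D}_\pi$, then their $\sigma$-images share a set in $\mathcal{D}$, hence carry a common index $i$ there, and therefore both carry the index $\pi(i)$ in $\mathcal{D}_\pi$, so their indices agree. For Restriction~\ref{res:simple}, each clause — the three tables sharing no bit, equality of table sizes, equality of block sizes, absence of singleton sets, and cleanness of all sets — is a statement about tables, bits, and blocks that does not mention indices and is thus untouched by $\sigma$. This establishes the lemma. I do not expect a substantial obstacle here; the only point that needs care is reading ``permutation of the indices'' as a single global relabeling of universe elements, rather than a block-wise relabeling or a permutation of table bit positions, since it is precisely the global uniformity that keeps Restriction~\ref{res:index} intact.
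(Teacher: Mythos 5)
Your proposal is correct and follows essentially the same route as the paper: a permutation of the indices is a uniform relabeling of elements within their blocks, which leaves block membership, set membership, and hence the data structure and query behaviour unchanged, so correctness carries over. Your explicit construction of the relabeled scheme via the bijection $\sigma$ and the check that Restrictions~\ref{res:index} and~\ref{res:simple} are preserved is a more formal rendering of the same argument the paper gives.
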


\begin{proof}

	Consider an $(n, m, 3s, t)$-scheme that satisfies
	Restrictions~\ref{res:index} and \ref{res:simple}. Suppose
	$\pi$ is some permutation on the indices of the blocks of
	table $\A$. We observe that a permutation of the indices do
	not affect the membership of a block -- two elements which
	belonged to block $\bl{a}$ before, still belongs to $\bl{a}$
	but with their indices changed according to $\pi$. The same is
	true for any set of table $\B$ or $\C$ -- elements of a set
	all had the same index, say $i$, to start with, and they will
	now have the index $\pi(i)$. So, the data structure of a
	scheme remains unaffected under the permutation, only the
	labels of the sets have changed. Thus, if a scheme was correct
	to begin with, it will remain so after a permutation of the
	indices. \qed

\end{proof}

We end the section with a final notational convenience. We would, in
the discussion to follow, require to perform some arithmetic on index
$i$, like $i+1$ or $2i+2$. As the range of indices lie between 1 and
$b$, inclusive, all such expressions should be considered $\pmod b +
1$. This would help us to keep the expressions simple and avoid
repetition.

				\section{Nodes and Paths}

In this section we define {\em nodes}, {\em edges}, and {\em paths},
structures that are defined on top of the elements belonging to a set.

\begin{definition}

	A {\em node} of table $\T$, denoted as $(\bl{e}_k,
	\bl{f}_k)_\T$, is an ordered pair of distinct elements
	$\bl{e}_k$ and $\bl{f}_k$ such that they belong to the same
	set in $\T$.

	\label{def:node}

\end{definition}

Each of the components of a node are called its {\em terms}, the first
being referred to as the {\em antecedent} and the second as the {\em
consequent}.

We say that a block $\bl{a}$ is {\em stored} in table $\B$ if the bit
corresponding to the block $\bl{a}$ in table $\A$ is set to 0. Then
any query for any element of block $\bl{a}$ will be made in table $\B$, and
the sets corresponding to those elements in table $\B$ should be set
to 1 or 0 according as the elements are in $\Su$ or not. We can, hence,
say that the elements of block $\bl{a}$ are being stored in table
$\B$. Storing a block or an element in table in $\C$ can similarly be
defined as when the the bit in table $\A$ corresponding to the block
$\bl{a}$ is set to 1.

\begin{obs}

	Suppose a node be such that one of its terms is in the subset
	$\Su$ and the other in $\X$. Then, if the antecedent of the
	node is stored in its own table, the consequent of the node
	cannot be stored in its table.

	\label{obs:node}

\end{obs}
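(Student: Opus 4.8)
The plan is to argue by contradiction, exploiting the fact that the antecedent and the consequent of a node in table $\T$ are mapped to one and the same bit of $\T$. Write the node as $(\bl{e}_k,\bl{f}_k)_\T$; since both terms lie in the same set of $\T$, we have $S_\T(\bl{e}_k)=S_\T(\bl{f}_k)$, and I will call this common bit $\beta$. Suppose, for contradiction, that both the antecedent \emph{and} the consequent are stored in $\T$, i.e.\ the bits of table $\A$ corresponding to the two blocks are set so that the query for each of $\bl{e}_k$ and $\bl{f}_k$ branches into $\T$ (to $0$ if $\T=\B$, to $1$ if $\T=\C$). Then the query for $\bl{e}_k$ reads its block's bit in $\A$, branches to $\T$, and returns the value of $\beta$; the query for $\bl{f}_k$ behaves identically. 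Consequently the membership answers produced for $\bl{e}_k$ and for $\bl{f}_k$ coincide — both equal the bit $\beta$.

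Next I would split according to the hypothesis that one term is in $\Su$ and the other in $\X$, which gives two cases. If the antecedent $\bl{e}_k\in\Su$ and the consequent $\bl{f}_k\in\X$, then correctness of the scheme on $\bl{e}_k$ forces $\beta=1$ (the answer must be ``Yes''); but then the query for $\bl{f}_k$ also returns $1$, reporting an element of $\X$ as a member, a contradiction. If instead $\bl{e}_k\in\X$ and $\bl{f}_k\in\Su$, correctness on $\bl{e}_k$ forces $\beta=0$, whence the query for $\bl{f}_k$ returns $0$ and answers ``No'' for a member of $\Su$, again a contradiction. In both cases the assumption that the consequent is stored in $\T$ is untenable, which is exactly the assertion.

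The argument rests on only two elementary facts: being in the same set of $\T$ is, by definition, the same as sharing a bit of $\T$; and ``being stored in its own table'' is precisely the condition under which a query for the element is resolved by that shared bit (with Restriction~\ref{res:simple}(1) ensuring the relevant bits of $\A$ and of $\T$ are genuinely distinct, so the branch behaves as described). I do not expect any real obstacle here. The only point that needs a little care is keeping the roles of antecedent and consequent straight — the conclusion is asymmetric, constraining the consequent on the basis of the status of the antecedent — and handling the $\T=\B$ and $\T=\C$ branches uniformly, which is achieved simply by phrasing everything in terms of the value of the shared bit $\beta$.
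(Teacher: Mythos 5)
Your proof is correct and follows essentially the same route as the paper: both arguments rest on the fact that the two terms of the node share a single bit in $\T$, so storing the antecedent in $\T$ forces that bit's value according to the antecedent's membership, after which storing the consequent in $\T$ would make its query return the wrong answer. The only cosmetic difference is that you treat both assignments of $\Su$ and $\X$ explicitly by contradiction, while the paper works out one case and notes the other is symmetric.
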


\begin{proof}

	Consider the node $(\bl{e}_k, \bl{f}_k)_\T$. If we store the
	antecedent in its table, namely $\T$, then there is way to
	ensure that the consequent cannot be stored in its table. To
	that end, we put $\bl{e}_k$ in $\Su$ and $\bl{f}_k$ in $\X$.
	Then as we are storing $\bl{e}_k$ in table $\T$, the set
	corresponding to $\bl{e}_k$ in the table must be set to 1. The
	element $\bl{f}_k$ belongs to the same set in $\T$ yet it is
	not part of $\Su$. So, $\bl{f}_k$, and consequently its block
	$\bl{f}$, cannot be stored in table $\T$, because if we do the
	query for element $\bl{f}_k$ will incorrectly return ``Yes''.

	An equally good choice to force the antecedent and the
	consequent to separate tables is to have $\bl{e}_k \in \X$ and
	$\bl{f}_k \in \Su$. \qed

\end{proof}

\begin{figure}[t]

    \centering
    
    \begin{tikzpicture}[thick,scale=0.8,every node/.style={transform shape}]
        
		\node at (0.5,7) {Table $\mathcal{B}$};
		
		\draw (-2,5.5) ellipse (.5cm and 1cm);
		\node at (-2,5.75) {$\bl{a}_1$};
		\node at (-2,5.15) {$\bl{b}_1$};
		\draw [decorate, decoration={brace,amplitude=10pt},xshift=-4pt,yshift=0pt] (-2.5,4.5) -- (-2.5,6.5) node[black,midway,xshift=-0.75cm,text width=.5cm,align=center] {Set $V$};
		
		\draw (0.5,5.5) ellipse (.5cm and 1cm);
		\node at (0.5,5.75) {$\bl{c}_3$};
		\node at (0.5,5.15) {$\bl{f}_3$};
		\draw [decorate, decoration={brace,amplitude=10pt},xshift=-4pt,yshift=0pt] (0,4.5) -- (0,6.5) node[black,midway,xshift=-0.75cm,text width=.5cm,align=center] {Set $W$};
		
		\draw (3,5.5) ellipse (.5cm and 1cm);
		\node at (3,6) {$\bl{d}_3$};
		\node at (3,5.5) {$\bl{e}_3$};
		\node at (3,5) {$\bl{g}_3$};
		\draw [decorate, decoration={brace,amplitude=10pt},xshift=-4pt,yshift=0pt] (2.5,4.5) -- (2.5,6.5) node[black,midway,xshift=-0.75cm,text width=.5cm,align=center] {Set $X$};

		\draw (3.8,4.3) -- (3.8,7.3);

		\node at (6.5,7) {Table $\mathcal{C}$};
		
		\draw (5.7,5.5) ellipse (.5cm and 1cm);
		\node at (5.7,6.2) {$\bl{b}_2$};
		\node at (5.7,5.733) {$\bl{c}_2$};
		\node at (5.7,5.266) {$\bl{d}_2$};
		\node at (5.7,4.8) {$\bl{e}_2$};
		\draw [decorate, decoration={brace,amplitude=10pt},xshift=-4pt,yshift=0pt] (5.2,4.5) -- (5.2,6.5) node[black,midway,xshift=-0.75cm,text width=.5cm,align=center] {Set $Y$};
		
		\draw (8.2,5.5) ellipse (.5cm and 1cm);
		\node at (8.2,5.75) {$\bl{g}_4$};
		\node at (8.2,5.15) {$\bl{h}_4$};
		\draw [decorate, decoration={brace,amplitude=10pt},xshift=-4pt,yshift=0pt] (7.7,4.5) -- (7.7,6.5) node[black,midway,xshift=-0.75cm,text width=.5cm,align=center] {Set $Z$};
		
		\node[ellipse,draw=black,minimum height=1.2cm,minimum width=0.6cm] (A) at (-2.5,0) {};
		\node at (-2.5,0.3) {$\bl{a}_1$};
		\node at (-2.5,-0.3) {$\bl{b}_1$};
		\node at (-2.2,-0.55) {$\mathcal{B}$};
		
		\node[ellipse,draw=black,minimum height=1.2cm,minimum width=0.6cm] (B) at (1,2.5) {};
		\node at (1,2.8) {$\bl{b}_2$};
		\node at (1,2.2) {$\bl{c}_2$};
		\node at (1.3,1.95) {$\mathcal{C}$};
		
		\node[ellipse,draw=black,minimum height=1.2cm,minimum width=0.6cm] (C) at (1,0.5) {};
		\node at (1,0.8) {$\bl{b}_2$};
		\node at (1,0.2) {$\bl{d}_2$};
		\node at (1.3,-0.05) {$\mathcal{C}$};
		
		\node[ellipse,draw=black,minimum height=1.2cm,minimum width=0.6cm] (D) at (1,-2.25) {};
		\node at (1,-1.95) {$\bl{b}_2$};
		\node at (1,-2.55) {$\bl{e}_2$};
		\node at (1.3,-2.8) {$\mathcal{C}$};
		
		\node[ellipse,draw=black,minimum height=1.2cm,minimum width=0.6cm] (E) at (4.5,3) {};
		\node at (4.5,3.3) {$\bl{c}_3$};
		\node at (4.5,2.7) {$\bl{f}_3$};
		\node at (4.8,2.45) {$\mathcal{B}$};
		
		\node[ellipse,draw=black,minimum height=1.2cm,minimum width=0.6cm] (F) at (4.5,1.5) {};
		\node at (4.5,1.8) {$\bl{d}_3$};
		\node at (4.5,1.2) {$\bl{e}_3$};
		\node at (4.8,0.95) {$\mathcal{B}$};
		
		\node[ellipse,draw=black,minimum height=1.2cm,minimum width=0.6cm] (G) at (4.5,0) {};
		\node at (4.5,0.3) {$\bl{d}_3$};
		\node at (4.5,-0.3) {$\bl{g}_3$};
		\node at (4.8,-0.55) {$\mathcal{B}$};
		
		\node[ellipse,draw=black,minimum height=1.2cm,minimum width=0.6cm] (H) at (4.5,-1.5) {};
		\node at (4.5,-1.2) {$\bl{e}_3$};
		\node at (4.5,-1.8) {$\bl{g}_3$};
		\node at (4.8,-2.05) {$\mathcal{B}$};
		
		\node[ellipse,draw=black,minimum height=1.2cm,minimum width=0.6cm] (I) at (4.5,-3) {};
		\node at (4.5,-2.7) {$\bl{e}_3$};
		\node at (4.5,-3.3) {$\bl{d}_3$};
		\node at (4.8,-3.55) {$\mathcal{B}$};
		
		\node[ellipse,draw=black,minimum height=1.2cm,minimum width=0.6cm] (J) at (8,-0.75) {};
        \node at (8,-0.45) {$\bl{g}_4$};
		\node at (8,-1.05) {$\bl{h}_4$};
		\node at (8.3,-1.3) {$\mathcal{C}$};
		
		\draw[line width=0.5pt, ->] (A) -- (B) node[midway, above] {$\bl{b}$};
		\draw[line width=0.5pt, ->] (A) -- (C) node[midway, above] {$\bl{b}$};
		\draw[line width=0.5pt, ->] (A) -- (D) node[midway, above] {$\bl{b}$};
		\draw[line width=0.5pt, ->] (B) -- (E) node[midway, above] {$\bl{c}$};
		\draw[line width=0.5pt, ->] (C) -- (F) node[midway, above] {$\bl{d}$};
		\draw[line width=0.5pt, ->] (C) -- (G) node[midway, above] {$\bl{d}$};
		\draw[line width=0.5pt, ->] (D) -- (H) node[midway, above] {$\bl{e}$};
		\draw[line width=0.5pt, ->] (D) -- (I) node[midway, above] {$\bl{e}$};
		\draw[line width=0.5pt, ->] (G) -- (J) node[midway, above] {$\bl{g}$};
		\draw[line width=0.5pt, ->] (H) -- (J) node[midway, above] {$\bl{g}$};

	\end{tikzpicture}
	
    \caption{Nodes, edges, and paths for the example arrangement of elements in sets.}
    \label{fig:nodes-paths}
\end{figure}
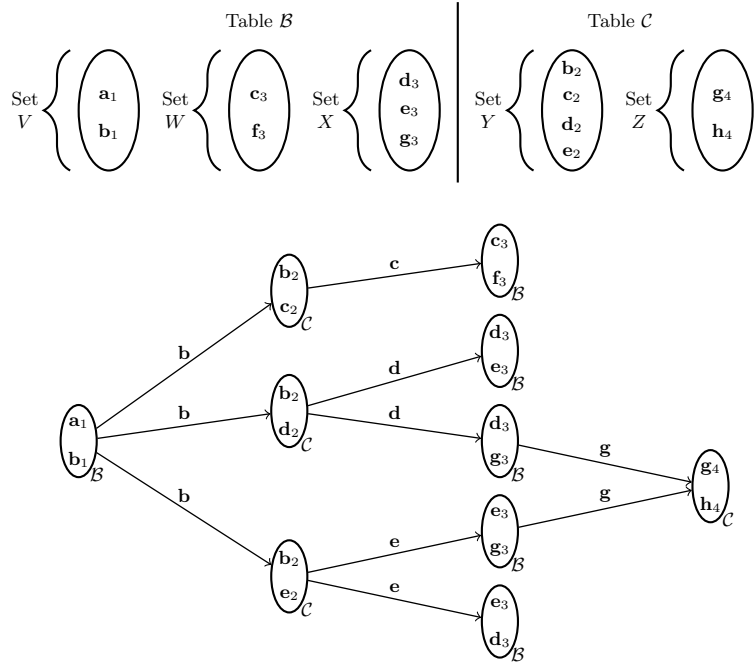

\begin{definition}

	There is said to be an {\em edge} from the node $(\bl{e}_k,
	\bl{f}_k)_{\T_1}$ to the node $(\bl{g}_l, \bl{h}_l)_{\T_2}$ if
	the following holds.

	\begin{enumerate}

		\item The nodes belong to distinct tables, i.e.\ $\T_1
			\neq \T_2$.

		\item $l = k+1$.

		\item The consequent of the first node and the
			antecedent of the second node belong to the
			same block, i.e.\ $\bl{f} = \bl{g}$.

	\end{enumerate}

	\label{def:edge}

\end{definition}

The second node above can be rewritten as $(\bl{f}_{k+1},
\bl{h}_{k+1})_{\T_2}$. The nodes with the edge between them are
connected via the common block $\bl{f}$, and hence will be will be
shown as
\[
	(\bl{e}_k, \bl{f}_k)_{\T_1} \
		\overset{\bl{f}}{\longrightarrow} \
		(\bl{f}_{k+1}, \bl{h}_{k+1})_{\T_2}.
\]

\begin{definition}

	A sequence of nodes is said to be a {\em path} if between
	every pair of adjacent nodes there is an edge from the former
	to the latter. The {\em length} of a path is the number of
	edges it contains.

	\label{def:path}

\end{definition}

A path will be denoted as
\[
	(\bl{e}_k, \bl{f}_k)_{\T_1} \
		\overset{\bl{f}}{\longrightarrow} \
		(\bl{f}_{k+1}, \bl{g}_{k+1})_{\T_2} \
		\overset{\bl{g}}{\longrightarrow} \
		(\bl{g}_{k+2}, \bl{h}_{k+2})_{\T_1} \
		\overset{\bl{h}}{\longrightarrow} \
		\dots
\]

To take an example, Figure~\ref{fig:nodes-paths} shows the relevant nodes, edges, and paths in the following scenario. In table $\mathcal{B}$, the elements $\bl{a}_1$ and $\bl{b}_1$ belong to the set $V$. Similarly, the elements $\bl{c}_3$, $\bl{f}_3$ belong to the set $W$, and $\bl{d}_3$, $\bl{e}_3$, $\bl{g}_3$ belong to the set $X$. In table $\mathcal{C}$, $\bl{b}_2$, $\bl{c}_2$, $\bl{d}_2$, $\bl{e}_2$ belong to the set $Y$, and $\bl{g}_4$, $\bl{h}_4$ belong to the set $Z$. One such node in the graph is $(\bl{a}_1, \bl{b}_1)_\mathcal{B}$. The node $(\bl{a}_1, \bl{b}_1)_\mathcal{B}$ is connected to $(\bl{b}_2, \bl{c}_2)_\mathcal{C}$ via the common block $\bl{b}$, is an edge in the graph. As is evident from the figure, there exists two paths from the node $(\bl{a}_1, \bl{b}_1)_\mathcal{B}$ to the node $(\bl{g}_4, \bl{h}_4)_\mathcal{C}$, each of length three.

For our discussion, we will only consider paths of length at most
$\lfloor \frac{b}{2} \rfloor - 1$; we will see in
Section~\ref{sec:modified} as to the reason why.

\begin{obs}

	Any element occurs at most once in a path.

	\label{obs:path}

\end{obs}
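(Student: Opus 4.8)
The plan is to argue by contradiction on the earliest repetition along a path. Suppose some element appears twice in a path $P$. Recall that a path is a sequence of nodes $(\bl{e}_k, \bl{f}_k)_{\T_1} \overset{\bl{f}}{\longrightarrow} (\bl{f}_{k+1}, \bl{g}_{k+1})_{\T_2} \overset{\bl{g}}{\longrightarrow} \dots$, where along each edge the index strictly increments by one and the tables alternate. First I would set up notation: if the path starts at index $k$ and has length $\ell$, then the nodes occurring in $P$ have indices $k, k+1, k+2, \dots, k+\ell$ in that order (all arithmetic on indices being $\pmod b + 1$, as fixed in the premise).

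The key observation is that an element $\bl{x}_j$ has a fixed index $j$, so it can only occur inside a node whose common index equals $j$. Since along a path the node-index is $k$ at the first node and increases by exactly one at every edge, the index value $j$ can be attained by at most the nodes at a single position of the path — \emph{unless} the index wraps around modulo $b$. This is precisely why we restricted attention to paths of length at most $\lfloor b/2 \rfloor - 1$: with $\ell \le \lfloor b/2 \rfloor - 1$, the indices $k, k+1, \dots, k+\ell$ consist of at most $\lfloor b/2 \rfloor$ consecutive residues $\pmod b$, hence are pairwise distinct. Therefore no two nodes of $P$ share the same index, so an element $\bl{x}_j$ (having the single index $j$) can lie in at most one node of $P$. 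I would state this as the first half of the argument.

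It remains to rule out an element occurring twice \emph{within the same node}. A node $(\bl{e}_k, \bl{f}_k)_\T$ is by Definition~\ref{def:node} an ordered pair of \emph{distinct} elements, so the antecedent and consequent of a single node are different elements. Combining the two parts: an element can appear in at most one node of $P$, and within that node it appears exactly once. Hence every element occurs at most once along $P$. The bookkeeping about the modular index arithmetic — confirming that $\ell+1 \le \lfloor b/2 \rfloor$ consecutive residues modulo $b$ are genuinely distinct — is the only place needing care, and that is a one-line pigeonhole check rather than a genuine obstacle; the rest is immediate from the definitions of node, edge, and path.
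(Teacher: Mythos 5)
Your proposal is correct and follows essentially the same route as the paper's own (much terser) proof: indices along a path increase by one per edge, the length bound $\lfloor b/2 \rfloor - 1$ prevents any wrap-around of indices $\pmod b + 1$, and distinctness within a node is immediate from Definition~\ref{def:node}. Your write-up merely makes explicit the pigeonhole check on consecutive residues that the paper leaves implicit.
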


\begin{proof}

	It follows from the definition of a path which dictates that
	indices increase $\pmod b + 1$ from the first node onwards,
	and from our upper bound on the length of a path. On the other
	hand, it should be noted that a block may occur multiple times
	along a path. \qed

\end{proof}

\begin{lemma}

	Suppose for every node in a path one of the terms of the node
	is in $\Su$ and the other is in $\X$. Then, if the antecedent of
	the first node is stored in its own table, antecedents of all
	the nodes will have to be stored in their respective tables
	and the consequents of the nodes cannot be stored in their
	respective tables.

	\label{lem:path}

\end{lemma}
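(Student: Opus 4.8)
The plan is to prove the statement by induction on the position of a node along the path, carrying the invariant ``the antecedent of this node is stored in its own table and its consequent is not.'' The base case is the first node: by hypothesis one of its terms lies in $\Su$ and the other in $\X$, and by assumption its antecedent is stored in its own table, so Observation~\ref{obs:node} immediately gives that its consequent cannot be stored in its table.

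For the inductive step, consider an edge
\[
	(\bl{e}_k, \bl{f}_k)_{\T_1} \ \overset{\bl{f}}{\longrightarrow} \ (\bl{f}_{k+1}, \bl{h}_{k+1})_{\T_2},
\]
and suppose the invariant has been established for the first node of this edge; in particular, the consequent $\bl{f}_k$ cannot be stored in table $\T_1$. The key point is that $\bl{f}_k$ and $\bl{f}_{k+1}$ lie in the same block $\bl{f}$, and a block is stored as a unit either in table $\B$ or in table $\C$ (table $\A$ only routes the query). Hence, since block $\bl{f}$ cannot be stored in $\T_1$, it must be stored in $\T_2$; consequently the antecedent $\bl{f}_{k+1}$ of the second node is stored in its own table $\T_2$. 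Applying Observation~\ref{obs:node} to the second node — whose two terms again split between $\Su$ and $\X$ by hypothesis — we conclude that its consequent $\bl{h}_{k+1}$ cannot be stored in table $\T_2$, which re-establishes the invariant at the next node. Chaining this from the first node to the last yields the lemma.

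I do not expect a serious obstacle here; the only step that needs care is the routing argument — that ``the consequent cannot be stored in its table'' genuinely forces the \emph{entire block} of that consequent into the other table. This uses the structure of two-adaptive schemes together with Restriction~\ref{res:simple} (the three tables share no bit, and $\B$, $\C$ are the only two possible destinations of a block), and the fact that storing the block in the forbidden table would make some query answer incorrectly. One should also note that Observation~\ref{obs:node} is insensitive to which of a node's two terms is the $\Su$-element and which is the $\X$-element, so it can be invoked at every node along the path regardless of how the hypothesis distributes the two tokens.
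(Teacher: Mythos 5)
Your proposal is correct and follows essentially the same argument as the paper: chain Observation~\ref{obs:node} along the path, using the fact that an excluded consequent forces its entire block into the other table, so that the next node's antecedent is stored in its own table. The paper adds only the side remark (via Observation~\ref{obs:path}) that a disjoint assignment of terms to $\Su$ and $\X$ is actually realisable, which your proof takes as given by the hypothesis.
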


\begin{proof}

	The lemma is a direct consequence of
	Observation~\ref{obs:node}, and the proof can be found in the
	appendix as Lemma~\ref{app:lem:path}. \qed

\end{proof}

				\section{Universe of Elements}

In this section, we define the universe of an element of $\U$
recursively, and establish its relation with nodes and paths.

\begin{definition}

	The {\em $i$-Universe} of an element $\bl{e}_k$ w.r.t.\ table
	$\T$, denoted as $\U^i_\T(\bl{e}_k)$, is defined as follows.
	\[
		\U^i_\T(\bl{e}_k) =
			\begin{cases}
				\left\{ \ \bl{u}_{k+1} \ \mid \
					\bl{u}_k \in S_\T(\bl{e}_k) \setminus
					\{ \bl{e}_k \} \ \right\},
					& \textnormal{for $i = 1$;} \\[1em]
				\bigcup\limits_{\bl{u}_l \ \in
					\ \U^{i-1}_\T(\bl{e}_k)}
					\U^1_{\T'}(\bl{u}_l),
					& \textnormal{for $i>1$.}
			\end{cases}
	\]
	The table $\T'$ is defined as follows.
	\[ \begin{array}{cl}
		\T' = \T, & \textnormal{if $i$ is odd}; \\
		\T' \neq \T, & \textnormal{otherwise.}
	\end{array} \]

	\label{def:univ}

\end{definition}

Similar to the upper bound on paths, we will consider $i$-universes
for $1 \leq i \leq \lfloor \frac{b}{2} \rfloor - 1$, and, as stated
before, we will see in Section~\ref{sec:modified} as to the reason why.

The $i$-Universe of an element is the union of the 1-Universes of all
the elements in its $(i-1)$-Universe. So, as $i$ increases so does the
size of the universe. We will show that the elements of the
$i$-Universe must necessarily belong to distinct sets in a table, so
the larger the $i$ the larger has to be the size of the data structure
to accomodate the $i$-Universe. We start with a few properties of the
elements belonging to the $i$-Universe of an element.

\begin{obs}
	$\left| \U^1_\T(\bl{e}_k) \right| \ = \ \left| S_\T(\bl{e}_k) \setminus \{ \bl{e}_k \} \right|$.
	\label{obs:1univ}
\end{obs}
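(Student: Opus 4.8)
The plan is to exhibit an explicit bijection between the two sets whose cardinalities are being compared, namely between $S_\T(\bl{e}_k) \setminus \{ \bl{e}_k \}$ and $\U^1_\T(\bl{e}_k)$, and the natural candidate is the ``index-shift'' map $\varphi$ defined by $\varphi(\bl{u}_k) = \bl{u}_{k+1}$ (with the $\pmod b + 1$ convention on indices). First I would observe that $\varphi$ is well defined on $S_\T(\bl{e}_k) \setminus \{ \bl{e}_k \}$: by Restriction~\ref{res:index} every element of the set $S_\T(\bl{e}_k)$ has the same index, which here is $k$, so each such element is of the form $\bl{u}_k$ for a uniquely determined block $\bl{u}$, and since every block of table $\A$ contains exactly one element of each index, $\bl{u}_{k+1}$ denotes a well-defined element of $\U$.

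Next I would check surjectivity, which is immediate: by the $i=1$ case of Definition~\ref{def:univ}, every element of $\U^1_\T(\bl{e}_k)$ is of the form $\bl{u}_{k+1}$ for some $\bl{u}_k \in S_\T(\bl{e}_k) \setminus \{ \bl{e}_k \}$, so it lies in the image of $\varphi$. For injectivity, suppose $\varphi(\bl{u}_k) = \varphi(\bl{v}_k)$, i.e.\ $\bl{u}_{k+1} = \bl{v}_{k+1}$; since a block contains exactly one element of any given index, the two elements sharing index $k+1$ forces $\bl{u} = \bl{v}$, and hence $\bl{u}_k = \bl{v}_k$. Thus $\varphi$ is a bijection, and the two sets have the same cardinality, which is the claimed equality.

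I do not expect any real obstacle here: the statement is essentially a bookkeeping consequence of the definition of $\U^1_\T$ together with the facts that sets are index-homogeneous (Restriction~\ref{res:index}) and that blocks have one element per index. The only point requiring a word of care is the use of the modular convention for indices when $k = b$, which is harmless since $\varphi$ is still a well-defined block-preserving map. If desired, one could phrase the argument even more briefly by noting that the rule $\bl{u}_k \mapsto \bl{u}_{k+1}$ is a restriction of a global bijection $\U_k \to \U_{k+1}$ (cf.\ Observation~\ref{obs:prop}), whose restriction to any subset of $\U_k$ is again a bijection onto its image.
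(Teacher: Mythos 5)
Your proof is correct: the index-shift map $\bl{u}_k \mapsto \bl{u}_{k+1}$ is exactly the bijection implicit in the $i=1$ case of Definition~\ref{def:univ}, and your well-definedness, injectivity, and surjectivity checks (using index-homogeneity of sets and one-element-per-index in blocks) are sound. The paper offers no explicit proof for this observation, treating it as immediate from the definition, so your argument simply spells out the same reasoning in full detail.
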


\begin{lemma}

	If an element $\bl{x}_l$ belongs to the $i$-Universe of
	$\bl{e}_k$, then $l = k + i$.

	\label{lem:index}

\end{lemma}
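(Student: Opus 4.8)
The plan is to induct on $i$, with the base case $i=1$ doing essentially all the work and the inductive step merely stacking copies of it. Note first that the statement is to be read with the convention that all index arithmetic is taken $\pmod b + 1$, so $k+i$ always denotes a legitimate index in $\{1,\dots,b\}$; moreover Restriction~\ref{res:index} is exactly what makes Definition~\ref{def:univ} well-formed, since it guarantees that every element of $S_\T(\bl{e}_k)$ shares the index $k$ with $\bl{e}_k$, and hence the notation $\bl{u}_k$ appearing in the definition is meaningful.

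For the base case $i=1$, we appeal directly to the first branch of Definition~\ref{def:univ}: every member of $\U^1_\T(\bl{e}_k)$ is of the form $\bl{u}_{k+1}$ where $\bl{u}_k \in S_\T(\bl{e}_k)\setminus\{\bl{e}_k\}$, so its index is $k+1 = k+i$, as required (if the set $S_\T(\bl{e}_k)\setminus\{\bl{e}_k\}$ is empty the claim is vacuous, which is consistent with Observation~\ref{obs:1univ}).

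For the inductive step, assume the claim holds for $i-1$, i.e.\ every element of $\U^{i-1}_\T(\bl{e}_k)$ has index $k+(i-1)$. By the second branch of Definition~\ref{def:univ},
\[
	\U^i_\T(\bl{e}_k) \ = \ \bigcup_{\bl{u}_l \ \in \ \U^{i-1}_\T(\bl{e}_k)} \U^1_{\T'}(\bl{u}_l),
\]
so any $\bl{x}_{l'} \in \U^i_\T(\bl{e}_k)$ lies in $\U^1_{\T'}(\bl{u}_l)$ for some $\bl{u}_l \in \U^{i-1}_\T(\bl{e}_k)$. By the induction hypothesis $l = k+(i-1)$, and by the base case applied to $\U^1_{\T'}(\bl{u}_l)$ we get $l' = l + 1 = k + (i-1) + 1 = k + i$. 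This completes the induction. Since we only ever consider $1 \le i \le \lfloor b/2 \rfloor - 1$, the main (and only mild) point to keep in mind throughout is that the modular convention keeps $k+i$ a valid index; the argument itself is a clean two-line induction with no genuine obstacle. I expect the only thing worth spelling out explicitly is the interaction with Restriction~\ref{res:index}, which is what licenses writing the elements of each relevant set with a single common index in the first place.
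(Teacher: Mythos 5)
Your proposal is correct and follows essentially the same argument as the paper: induction on $i$, with the base case read off directly from the first branch of Definition~\ref{def:univ} and the inductive step reducing an element of the $i$-Universe to the 1-Universe of some element of the $(i-1)$-Universe. The extra remarks on the modular index convention and Restriction~\ref{res:index} are harmless clarifications, not a different route.
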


\begin{proof}

	The statement of the lemma can be established by induction on
	$i$, and the proof is given in the appendix as
	Lemma~\ref{app:lem:index}. \qed

\end{proof}

\begin{lemma}

	If the element $\bl{x}_{k+i}$ belongs to the $i$-Universe of
	$\bl{e}_k$ w.r.t.\ table $\T$, then there is a path such that

	\begin{enumerate}

		\item The first node is in table $\T$ with its
			antecedent being $\bl{e}_k$.

		\item The last node is in table $\T'$ with its
			antecedent being $\bl{x}_{k+i}$. The table
			$\T'$ is defined as follows.
			\[ \begin{array}{rl}
				\T = \T', & \textnormal{if $i$ is even} \\
				\T \neq \T', & \textnormal{otherwise}.
			\end{array} \]

		\item The length of the path is $i$.

	\end{enumerate}

	\label{lem:univ}

\end{lemma}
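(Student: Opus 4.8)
The plan is to induct on $i$, following the recursion in Definition~\ref{def:univ}. For the base case $i=1$, unfold the definition: $\bl{x}_{k+1}\in\U^1_\T(\bl{e}_k)$ means $\bl{x}_k\in S_\T(\bl{e}_k)\setminus\{\bl{e}_k\}$, so $(\bl{e}_k,\bl{x}_k)_\T$ is a node of table $\T$ with antecedent $\bl{e}_k$. Since there are no singleton sets (Restriction~\ref{res:simple}), the set containing $\bl{x}_{k+1}$ in the table $\T'\neq\T$ has a further element $\bl{z}_{k+1}$, giving a node $(\bl{x}_{k+1},\bl{z}_{k+1})_{\T'}$. The consequent of the first node and the antecedent of the second both lie in block $\bl{x}$ and the index has gone up by one, so by Definition~\ref{def:edge} there is an edge between them; the two-node sequence is a path of length $1$ meeting all three requirements (note $1$ is odd and $\T'\neq\T$, as demanded).

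For $i>1$, assume the claim for $i-1$. Let $\T'$ be the table used at level $i$ of the recursion, so $\T'=\T$ iff $i$ is odd. The hypothesis $\bl{x}_{k+i}\in\U^i_\T(\bl{e}_k)$ provides some $\bl{u}_l\in\U^{i-1}_\T(\bl{e}_k)$ with $\bl{x}_{k+i}\in\U^1_{\T'}(\bl{u}_l)$; by Lemma~\ref{lem:index}, $l=k+i-1$, and unfolding $\U^1_{\T'}$ gives $\bl{x}_{k+i-1}\in S_{\T'}(\bl{u}_{k+i-1})\setminus\{\bl{u}_{k+i-1}\}$. Apply the induction hypothesis to $\bl{u}_{k+i-1}\in\U^{i-1}_\T(\bl{e}_k)$: there is a path $P$ of length $i-1$ whose first node is in $\T$ with antecedent $\bl{e}_k$ and whose last node lies in a table $\widehat\T$ with antecedent $\bl{u}_{k+i-1}$. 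A parity check identifies $\widehat\T=\T'$: the last node of a length-$(i-1)$ path lies in $\T$ exactly when $i-1$ is even, i.e.\ when $i$ is odd, which is precisely when $\T'=\T$.

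Now I would edit $P$ in two moves. First, replace the consequent of $P$'s last node by $\bl{x}_{k+i-1}$: this is legitimate because $\bl{x}_{k+i-1}$ lies in the same set of $\T'$ as $\bl{u}_{k+i-1}$, so $(\bl{u}_{k+i-1},\bl{x}_{k+i-1})_{\T'}$ is a genuine node, and because the edge entering the last node (if $i-1\geq 1$) constrains only its antecedent, which is unchanged. Second, append a node: as there are no singleton sets, the set of $\bl{x}_{k+i}$ in the table $\T''\neq\T'$ contains some $\bl{z}_{k+i}\neq\bl{x}_{k+i}$, and $(\bl{x}_{k+i},\bl{z}_{k+i})_{\T''}$ sits in a table different from $\T'$, at index one higher, with antecedent block $\bl{x}$ equal to the consequent block of the new last node of $P$; hence there is an edge by Definition~\ref{def:edge}. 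The extended sequence is a path of length $i$ with first node in $\T$ with antecedent $\bl{e}_k$ and last node $(\bl{x}_{k+i},\bl{z}_{k+i})_{\T''}$, and a last parity check ($\T''=\T$ iff $i$ is even) confirms $\T''$ is exactly the table the lemma prescribes; by Observation~\ref{obs:path} the path automatically has distinct elements. The only delicate points are the three parity bookkeeping steps tracking which of $\B,\C$ plays the role of $\T,\T',\T''$, and the observation that swapping the terminal consequent preserves the path — I expect the former to be the main thing to get exactly right, but neither is a genuine obstacle.
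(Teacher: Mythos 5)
Your proposal is correct and follows essentially the same route as the paper's own proof: induction on $i$, with the base case built from $S_\T(\bl{e}_k)$ and a non-singleton set in the other table, and the inductive step obtained by taking the length-$(i-1)$ path to $\bl{u}_{k+i-1}$ from the induction hypothesis, choosing its terminal consequent to be $\bl{x}_{k+i-1}$, and appending the node $(\bl{x}_{k+i},\bl{z}_{k+i})$ in the opposite table. The only cosmetic difference is that you track both parities explicitly, whereas the paper argues the odd case and notes the even case is symmetric.
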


It is important to observe that the nature of the table $\T'$ in the
lemma above is contrary to that in Definition~\ref{def:univ} in that
$\T'$ is the same as $\T$ when $i$ is even in the lemma above, whereas
they are equal when $i$ is odd in the definition of the $i$-Universe.

\begin{proof}

	We will prove the aforementioned statement by induction on
	$i$ in the appendix in Lemma~\ref{app:lem:univ}. \qed

\end{proof}

				\section{Bad Elements}

In this section, we show that large universes of elements give rise
to {\em bad elements}, which put constraints on how and what subsets
can be stored.

\begin{definition}

	An element $\bl{e}_k$ is said to be {\em $i$-bad} w.r.t.\
	table $\T$ if for any $j$ between 1 and $i$, inclusive, there
	exist distinct elements $\bl{u}_{k+j}$ and
	$\bl{v}_{k+j}$ in $\U^j_\T(\bl{e}_k)$ s.t.\
	\[
		\bl{v}_{k+j} \in S_{\T'}(\bl{u}_{k+j}).
	\]
	The table $\T'$ is defined as follows.
	\[ \begin{array}{rl}
		\T = \T', & \textnormal{if $j$ is even} \\
		\T \neq \T', & \textnormal{otherwise}.
	\end{array} \]

	Elements which are not $i$-bad are said to be {\em $i$-good}.

	\label{def:bad}

\end{definition}

The above definition suggests that if $l$ is some constant less than
or equal to $i$ and the element $\bl{e}_k$ is $l$-bad, then it is also
$i$-bad.

\begin{lemma}

	If an element $\bl{e}_k$ is $i$-bad w.r.t.\ table $\T$, then
	there exists a choice of the sets $\Su$ and $\X$, each of size
	at most $2i$, s.t.\ the block $\bl{e}$ cannot be stored in
	table $\T$.

	\label{lem:bad}

\end{lemma}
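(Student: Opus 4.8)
The plan is to exploit the path machinery built in the preceding sections. By Definition~\ref{def:bad}, an $i$-bad element $\bl{e}_k$ comes equipped with some $j$ with $1 \le j \le i$ and two distinct elements $\bl{u}_{k+j}, \bl{v}_{k+j} \in \U^j_\T(\bl{e}_k)$ lying in a common set of the table $\T'$ (with $\T' = \T$ iff $j$ is even, matching the convention of Lemma~\ref{lem:univ}). I would fix such a $j$ and such a pair, and aim to construct $\Su$ and $\X$ so that the hypothesis ``block $\bl{e}$ is stored in $\T$'' forces block $\bl{v}$ both into $\T'$ and out of $\T'$.

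First I would invoke Lemma~\ref{lem:univ} twice. Applied to $\bl{u}_{k+j}$ it yields a path $P_1$ of length $j$ whose first node lies in $\T$ with antecedent $\bl{e}_k$ and whose last node lies in $\T'$ with antecedent $\bl{u}_{k+j}$. Since the consequent of the final node of a path is not constrained by the other defining properties of a path (it participates in no edge), and since by Lemma~\ref{lem:index} an element of index $k+j$ can occur in $P_1$ only as a term of that last node, I may replace the last consequent by $\bl{v}_{k+j}$; this is legitimate because $\bl{v}_{k+j} \in S_{\T'}(\bl{u}_{k+j}) \setminus \{\bl{u}_{k+j}\}$ and Observation~\ref{obs:path} is preserved. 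Thus $P_1$ ends at the node $(\bl{u}_{k+j}, \bl{v}_{k+j})_{\T'}$. Symmetrically, applying Lemma~\ref{lem:univ} to $\bl{v}_{k+j}$ and adjusting the last consequent gives a path $P_2$ of length $j$ from a node in $\T$ with antecedent $\bl{e}_k$ to the node $(\bl{v}_{k+j}, \bl{u}_{k+j})_{\T'}$.

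Next I would define $\Su$ and $\X$: for every node occurring in $P_1$ or $P_2$, place one of its two terms in $\Su$ and the other in $\X$. This can be done coherently, with $\Su \cap \X = \emptyset$, because the only elements common to $P_1$ and $P_2$ are $\bl{e}_k$ (the antecedent of both first nodes) and $\bl{u}_{k+j}, \bl{v}_{k+j}$ (the two terms of both last nodes), and these sit in matching positions. Now suppose some valid storage of $\Su$ — which must answer ``No'' on all of $\X$ — had block $\bl{e}$ stored in $\T$. Then $\bl{e}_k$ is stored in its own table, so Lemma~\ref{lem:path} applied to $P_1$ forces every antecedent of $P_1$ into its table and keeps every consequent out; in particular the consequent $\bl{v}_{k+j}$ of the last node of $P_1$ is not stored in $\T'$, so block $\bl{v}$ is stored in the table other than $\T'$. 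But Lemma~\ref{lem:path} applied to $P_2$ forces its last antecedent $\bl{v}_{k+j}$ into $\T'$, so block $\bl{v}$ is stored in $\T'$ — a contradiction. Hence block $\bl{e}$ cannot be stored in $\T$. (It is precisely here, through Observation~\ref{obs:node} and Lemma~\ref{lem:path}, that the disjointness of $\Su$ and $\X$ enters, which is why this lemma — unlike most of the others — needs a dedicated argument in the generalised setting.)

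It remains to bound the sizes. Each of $P_1, P_2$ has length at most $j \le i$, hence at most $i+1$ nodes, contributing at most $i+1$ elements to each of $\Su$ and $\X$; since the two paths share $\bl{e}_k$ and the pair $\{\bl{u}_{k+j}, \bl{v}_{k+j}\}$, a careful accounting of these overlaps yields $|\Su|, |\X| \le 2i$. This bookkeeping — checking that the shared terms split consistently and that the overlap is large enough to keep both counts at $2i$ — is the step I expect to be the most delicate; the remainder is a routine chaining of Observation~\ref{obs:node} along the two paths, as in the size-two treatment of Kesh and Sharma~\cite{kesh:dam}, with Lemmas~\ref{lem:index} and~\ref{lem:univ} carrying the generalisation to arbitrary $n$.
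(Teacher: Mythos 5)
Your overall strategy is exactly the paper's: take the witnessing $j$ and the pair $\bl{u}_{k+j}, \bl{v}_{k+j}$, build two paths via Lemma~\ref{lem:univ}, set the last consequents crosswise so the terminal nodes are $(\bl{u}_{k+j},\bl{v}_{k+j})_{\T'}$ and $(\bl{v}_{k+j},\bl{u}_{k+j})_{\T'}$, and then play Lemma~\ref{lem:path} on both paths to force block $\bl{v}$ simultaneously into and out of $\T'$. That part is sound and matches the paper.

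The gap is in your disjointness argument. You assert that the only elements common to $P_1$ and $P_2$ are $\bl{e}_k$ and the pair $\bl{u}_{k+j},\bl{v}_{k+j}$, ``sitting in matching positions,'' and you derive $\Su \cap \X = \emptyset$ from that. This structural claim is false in general: both paths are extracted from the same universe construction rooted at $\bl{e}_k$, so they can funnel through common intermediate elements (for instance, if $S_\T(\bl{e}_k)$ has only one other element, every path of length $\geq 2$ from $\bl{e}_k$ shares its first edge and the corresponding block), and a shared element of index $k+l$ may appear with different set-partners, or as the antecedent of one path's node and the consequent of the other's --- indeed even your two terminal nodes have the shared pair in swapped positions. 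So a per-node choice of ``one term in $\Su$, the other in $\X$'' made independently on the two paths can genuinely conflict, and your stated reason does not rule this out. The paper's proof confronts exactly this: it allows an element $\bl{w}_{k+l}$ to land in $\Su_1 \cap \X_2$, and repairs it by swapping the $\Su/\X$ assignment at the offending node of the second path; this local swap is safe because, by Observation~\ref{obs:path} and Lemma~\ref{lem:index}, an index occurs at most once in a path, so the change cannot propagate to any other node, and Observation~\ref{obs:node} is indifferent to which term of a node lies in $\Su$ and which in $\X$. You need this (or an equivalent conflict-resolution step) to make the disjointness --- and hence the application of Lemma~\ref{lem:path} --- legitimate; the same accounting of overlaps is also what brings the sizes down to the claimed bound of $2j \leq 2i$, which you currently leave as an unverified ``careful accounting.''
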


\begin{proof}

	This lemma has been proved in the appendix as Lemma~\ref{app:lem:bad}. \qed

\end{proof}

				\section{Modified Schemes}
				\label{sec:modified}

Consider any restricted adaptive $(n, m, 3s, 2)$-scheme, the last
component 2 denoting the number of bitprobes allowed. Let some element
$\bl{e}_1$ of its universe $\U$ be $i$-bad w.r.t.\ table $\B$.
Lemma~\ref{lem:bad} states that there exist sets $\Su_1$ and $\X_1$,
each of size at most $2i$, s.t.\ the block $\bl{e}$ cannot be stored
in table $\B$. Also, as an element becomes $i$-bad due to the elements
of its $i$-Universe, the indices of the elements in the either of the
sets $\Su_1$ and $\X_1$ lie between 1 and $i+1$. Consider the element
$\bl{e}_{i+2}$. If this element is $i$-bad w.r.t.\ table $\C$ there
will exist sets $\Su_2$ and $\X_2$, again of size at most $2i$ each,
s.t.\ the block $\bl{e}$ cannot be stored in table $\C$. The range of
the indices in the two sets in this case would be from $i+2$ to
$2i+2$.

We already know that the sets $\Su_1$ and $\X_1$ are disjoint, as are
the sets $\Su_2$ and $\X_2$. Furthermore, as the range of indices in
the two pairs of sets do not overlap, we can deduce that all the four
sets are disjoint. Let us then consider the sets
\[
	\Su = \Su_1 \cup \Su_2 \
		\textnormal{ and } \
		\X = \X_1 \cup \X_2,
\]
each of their sizes being at most $4i$. As discussed above, this pair
of sets imply that the block $\bl{e}$ cannot be stored in either of
the tables $\B$ or $\C$, which is absurd as the scheme is deemed to be
correct. So, we may conclude the following.

\begin{lemma}

	For any block of table $\A$, say $\bl{e}$, if the element
	$\bl{e}_1$ is $i$-bad w.r.t.\ table $\B$, then the element
	$\bl{e}_{i+2}$ cannot be $i$-bad w.r.t.\ table $\C$.

	\label{lem:good}

\end{lemma}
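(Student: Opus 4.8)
The plan is to argue by contradiction, assuming that both $\bl{e}_1$ is $i$-bad w.r.t.\ table $\B$ \emph{and} $\bl{e}_{i+2}$ is $i$-bad w.r.t.\ table $\C$, and then manufacture a single pair of subsets $(\Su, \X)$ of the universe that the scheme provably cannot handle. The discussion immediately preceding the lemma statement already lays out the skeleton; what remains is to verify carefully that the pieces fit together. First I would invoke Lemma~\ref{lem:bad} twice: once for $\bl{e}_1$ being $i$-bad w.r.t.\ $\B$, yielding sets $\Su_1, \X_1$ (each of size at most $2i$) that force the block $\bl{e}$ out of table $\B$; and once for $\bl{e}_{i+2}$ being $i$-bad w.r.t.\ $\C$, yielding sets $\Su_2, \X_2$ (each of size at most $2i$) that force the block $\bl{e}$ out of table $\C$.

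The crux of the argument — and the step I expect to be the main obstacle — is showing that the four sets $\Su_1, \X_1, \Su_2, \X_2$ are pairwise disjoint, so that the combined sets $\Su = \Su_1 \cup \Su_2$ and $\X = \X_1 \cup \X_2$ are genuinely well-defined as a "store $\Su$, keep $\X$ out" instance. The disjointness of $\Su_1$ from $\X_1$ (and of $\Su_2$ from $\X_2$) should come from the construction inside Lemma~\ref{lem:bad}; this is exactly the place the introduction flags when it says Lemma~\ref{lem:bad} "works because the subsets $\Su$ and $\X$ are disjoint." For the cross-disjointness between the first pair and the second pair, I would use Lemma~\ref{lem:index}: the elements witnessing that $\bl{e}_1$ is $i$-bad all come from $j$-Universes of $\bl{e}_1$ for $1 \le j \le i$, hence by Lemma~\ref{lem:index} their indices lie in $\{2, \dots, i+1\}$, and together with $\bl{e}_1$ itself the indices range over $\{1, \dots, i+1\}$; symmetrically the elements witnessing that $\bl{e}_{i+2}$ is $i$-bad have indices in $\{i+2, \dots, 2i+2\}$. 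Since $\{1, \dots, i+1\}$ and $\{i+2, \dots, 2i+2\}$ are disjoint (here the bound $i \le \lfloor b/2 \rfloor - 1$ guarantees no wraparound $\pmod{b}+1$ collisions, since $2i+2 \le b$), no element can appear in both pairs, and pairwise disjointness of all four sets follows.

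Once disjointness is established, I would observe that $\Su$ and $\X$ remain disjoint and that $|\Su|, |\X| \le 4i$, and then note that the scheme, being correct, must store the subset $\Su$ while answering "No" to every element of $\X$. The block $\bl{e}$ in table $\A$ is set either to $0$ (so $\bl{e}$ is stored in $\B$) or to $1$ (so $\bl{e}$ is stored in $\C$) — there is no third option. But the presence of $\Su_1, \X_1 \subseteq \Su \cup \X$ rules out storing $\bl{e}$ in $\B$ (by the conclusion of the first application of Lemma~\ref{lem:bad}, whose hypothesis about the assignment of $\Su$-elements and $\X$-elements is met since we have merely enlarged the instance consistently), and $\Su_2, \X_2 \subseteq \Su \cup \X$ rules out storing $\bl{e}$ in $\C$. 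This contradicts correctness of the scheme, so the assumption that both badness conditions hold simultaneously is untenable, which is precisely the statement of the lemma. The only subtlety to be careful about in writing this up is that "cannot be stored in table $\B$" as produced by Lemma~\ref{lem:bad} should be robust under enlarging $\Su$ and $\X$ by extra elements of strictly different indices — which it is, since those extra elements touch disjoint sets in tables $\B$ and $\C$ (Restriction~\ref{res:index} forces sets to be index-homogeneous) and therefore cannot interfere with the constraints already imposed on block $\bl{e}$.
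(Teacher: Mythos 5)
Your proposal is correct and follows essentially the same route as the paper, which proves this lemma in the discussion immediately preceding its statement: two applications of Lemma~\ref{lem:bad}, disjointness of the four sets via the non-overlapping index ranges $1$ to $i+1$ and $i+2$ to $2i+2$, and then the contradiction that block $\bl{e}$ can be stored in neither $\B$ nor $\C$. Your added remarks on avoiding wraparound (via $b > 2i+2$) and on the robustness of the obstruction under enlarging $\Su$ and $\X$ only make explicit details the paper leaves implicit.
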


Let us partition the universe $\U$ based on good and bad elements
w.r.t.\ table $\B$. One part will be the union of all those blocks
whose index 1 elements are good. The other part will be union of the
remaining blocks.
\[
	\U' = \mathop{\bigcup_{\bl{a}_1
		\textnormal{ is $i$-good}}}_{\textnormal{w.r.t.\ }\B}
		\bl{a}; \ \
	\U'' = \mathop{\bigcup_{\bl{a}_1
		\textnormal{ is $i$-bad}}}_{\textnormal{w.r.t.\ }\B}
		\bl{a}.
\]
According to Lemma~\ref{lem:good}, we know that though the index $1$
elements of the blocks of $\U''$ are bad w.r.t.\ table $\B$, the index
$i+2$ elements must necessarily be good w.r.t.\ table $\C$.

We now split our data structure in the following way. For any set $X$
in either of table $\B$ or $\C$, we split it into two sets, one
containing the elements of $\U'$ and the other containing the elements of
$\U''$. More formally,
\[
	X = X' \cup X''; \ \ X' \subset \U', \ X'' \subset \U''.
\]
It is important to note that the indices of the elements in the two sets
$X'$ and $X''$ are the same as that of $X$. Consequently, the table
$\B$ has been split into two parts, namely $\B'$ containing the sets
with elements from $\U'$, and $\B''$ containing sets with elements
from $\U''$. Thus, the collection of all sets in table $\B$ containing
elements with index $k$, namely $\B_k,$ is now $\B'_k \cup \B''_k$.
The table $\C$ have similarly been split into two parts - $\C'$ and
$\C''$.

\begin{obs}

	The size of $\B_k$ has at most doubled due to the above
	modification.

	\label{obs:split}

\end{obs}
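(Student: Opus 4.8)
The plan is to track, index by index, what the modification does to the sets of table $\B$. The only operation performed is to replace every set $X$ by the two sets $X' = X \cap \U'$ and $X'' = X \cap \U''$. Since $\U'$ (the union of the $i$-good blocks w.r.t.\ $\B$) and $\U''$ (the union of the $i$-bad blocks w.r.t.\ $\B$) partition $\U$, we have $X = X' \cup X''$ with $X' \cap X'' = \emptyset$, so each original set contributes at most two sets to the modified table — and exactly two only when both halves are non-empty. Recall that here the ``size'' of $\B_k$ means the number of sets in the collection $\B_k$, equivalently the number of bits of table $\B$ touched by index-$k$ elements.

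Next I would use the fact, noted in the paragraph preceding the statement, that the split preserves indices: the elements of $X'$ and $X''$ all carry the same index as those of $X$. Hence if $X \in \B_k$ then $X' \in \B'_k$ and $X'' \in \B''_k$, and the collection of index-$k$ sets of table $\B$ after the modification is exactly $\B'_k \cup \B''_k$. For the count, observe that $X \mapsto X'$ is a surjection from $\{X \in \B_k : X \cap \U' \neq \emptyset\}$ onto $\B'_k$, and it is injective: distinct sets of a table are disjoint, so distinct $X$'s yield disjoint, hence distinct, non-empty $X'$'s. Therefore $|\B'_k| \le |\B_k|$ in the original scheme, and symmetrically $|\B''_k| \le |\B_k|$. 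Combining, the size of $\B_k$ after the modification is $|\B'_k \cup \B''_k| \le |\B'_k| + |\B''_k| \le 2\,|\B_k|$, which is the claim; the identical argument applies to $\C_k$.

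I do not expect a genuine obstacle here, as the statement is essentially bookkeeping. The two points that warrant a line of care are (i) confirming that the split genuinely preserves indices, so that $\B'_k$ and $\B''_k$ are the correct collections to compare against $\B_k$, and (ii) remembering that an original set whose $\U'$- or $\U''$-part is empty contributes only one set to the modified table, which is exactly why the bound reads ``at most doubled'' rather than ``exactly doubled''. Neither point presents any difficulty.
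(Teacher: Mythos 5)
Your argument is correct and matches the paper's (implicit) reasoning: the paper states this as a self-evident consequence of splitting each set $X \in \B_k$ into the index-preserving parts $X' \subset \U'$ and $X'' \subset \U''$, which is exactly the bookkeeping you carry out. No issues.
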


The table $\A$ is also split into two tables, namely $\A'$ and $\A''$,
containing elements of $\U'$ and $\U''$, respectively. As per our
definition of $\U'$ and $\U''$, either a block belongs entirely in
$\U'$ or entirely in $\U''$, and thus individual blocks are not split.

We now have two sets of data structures, one corresponding to the
elements of $\U'$ and the other corresponding to $\U''$ --
\[
	\left( \A', \B', \C' \right) \textnormal{ and }
		\left( \A'', \B'', \C'' \right).
\]
This also means that within the original scheme, we have two
independent schemes, one for the elements of $\U'$ and the other for
the elements of $\U''$. Any subset $\Su$ that is to be stored can now
be split into $\Su' \subset \U'$ and stored in the data structure
corresponding to $\U'$, and $\Su'' \subset \U''$ which can be stored
in the data structure corresponding to $\U''$. The storage and query
schemes remain as before for each of the parts of the data structure.
So, to store a subset, if any block was earlier set to 0, in the new
data structure it will still be set to 0. If any set $X$ was being set
to 1, now both $X'$ and $X''$ will be set to 1; and so on.

We further modify the new data structure as follows. For the part
$(\A'', \B'', \C'')$, we interchange the parts $\B''$ and $\C''$ in
the tables $\B$ and $\C$ so that $\B''$ will now be part of table $\C$
and $\C''$ will now be part of table $\B$. With this modification, for
any index $k$ the tables will be as follows.
\[
	\B_k = \B'_k \cup \C''_k; \ \ \C_k = \C'_k \cup \B''_k
\]
As the part pertaining to $\U'$, i.e.\ $(\A', \B', \C')$, is
unaffected, the query scheme and storage scheme for it remains
unchanged. For the part $(\A'', \B'', \C'')$, if a block was earlier
set to 0 and thus sent to table $\B$, it should now be set to 1 and
sent to table $\C$. Similarly, a block which was earlier set to 1 will
now have to be set to 0.

Let us consider the sizes of the tables. For lack of a better
notation, we will use $\T^{(0)}, \T^{(1)}, \T^{(2)}$ to refer to the
original table, the table after the first modification, and after the
second modification, respectively. Observation~\ref{obs:split} tells
us that
\[
	|\T_k^{(1)}| \leq 2 \cdot |\T_k^{(0)}|.
\]
After the second modification, we note that
\[
	|\B_k^{(2)}| + |\C_k^{(2)}| = |\B_k^{(1)}| + |\C_k^{(1)}|.
\]

We make the third and final modification to our scheme. Before the
second modification, all the elements of index 1 in $\B'$ were good
w.r.t.\ table $\B$. After the second modification, all the elements of index
$i+2$ in $\C''$, which were earlier good w.r.t.\ table $\C$, are now
good w.r.t.\ table $\B$ because $\C''$ is now part of table $\B$. In
Lemma~\ref{lem:label}, we have seen that the correctness of a scheme
remains unaffected under a permutation of its indices. We now apply
the following permutation over the indices of the data structure
corresponding to $\U''$ -- the labels $k$ and $k+i+1$ are
interchanged, where $1 \leq k \leq i+1$, whereas the rest of the
indices remain unchanged. With this further modification, all the
indices labelled from $i+2$ to $2i+2$ will now be labelled 1 to $i+1$
in that order, whereas the previously labelled indices 1 and $i+1$ will
now be labelled $i+2$ to $2i+2$, in that order.

With this final modification, we now have a scheme where all the elements
of index 1 are good w.r.t.\ table $\B$.

\begin{lemma}

	Any given restricted $(n, m, 3s, 2)$-scheme can be modified
	into a $(n, m, 6s, 2)$-scheme such that in the modified
	scheme all the elements of index 1 are $i$-good w.r.t.\ table
	$\B$.

	\label{lem:modified}

\end{lemma}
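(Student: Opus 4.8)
The plan is to assemble the lemma by simply chaining together the three modifications already described in Section~\ref{sec:modified}, keeping careful track of the table sizes at each stage and of which elements are good w.r.t.\ table $\B$. First I would invoke Lemma~\ref{lem:good} to justify the partition $\U = \U' \cup \U''$: the index-1 elements of blocks in $\U'$ are $i$-good w.r.t.\ $\B$ by definition, and for blocks in $\U''$ whose index-1 element is $i$-bad w.r.t.\ $\B$, Lemma~\ref{lem:good} guarantees the index-$(i+2)$ element is $i$-good w.r.t.\ $\C$. This is the structural fact that makes the whole construction work.

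Next I would carry out the size bookkeeping. The first modification (splitting each set of $\B$ and $\C$ along the $\U'/\U''$ partition) at most doubles each $|\B_k|$ and each $|\C_k|$ by Observation~\ref{obs:split}, so $|\T_k^{(1)}| \le 2|\T_k^{(0)}|$ for $\T \in \{\B,\C\}$. The second modification (swapping $\B''$ and $\C''$) only relabels which part sits in which table, so $|\B_k^{(2)}| + |\C_k^{(2)}| = |\B_k^{(1)}| + |\C_k^{(1)}| \le 2(|\B_k^{(0)}| + |\C_k^{(0)}|)$. The table $\A$ is split into $\A'$ and $\A''$ without splitting any block, so $|\A^{(2)}| = |\A^{(0)}|$ (and in any case is dominated by the bound on $\B$ and $\C$). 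The third modification is merely a permutation of indices within the $\U''$ data structure, which by Lemma~\ref{lem:label} preserves correctness and does not change any table size. Summing over the (at most) $b$ index classes, each of the new tables $\B$ and $\C$ has size at most $2s$, so the total space is at most $6s$; hence the modified scheme is an $(n,m,6s,2)$-scheme.

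Finally I would verify the goodness conclusion. In the part $(\A',\B',\C')$, which is untouched throughout, the index-1 elements are $i$-good w.r.t.\ $\B$ by the definition of $\U'$. In the part $(\A'',\B'',\C'')$, after the second modification $\C''$ has become part of table $\B$, so its index-$(i+2)$ elements — which were $i$-good w.r.t.\ $\C$ — are now $i$-good w.r.t.\ table $\B$; the third modification then relabels index $i+2$ as index $1$, so these are now the index-1 elements of the $\U''$ blocks and they are $i$-good w.r.t.\ $\B$. Combining the two parts, every index-1 element of the modified scheme is $i$-good w.r.t.\ table $\B$, as claimed. \qed

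The step I expect to require the most care is not any single inequality but the consistency check that the three modifications actually compose into a \emph{valid} $2$-bitprobe scheme — i.e.\ that after swapping $\B''$ with $\C''$ and relabelling indices, the storage and query schemes (with a block formerly sent to $\B$ now sent to $\C$ and its $\A$-bit flipped) still answer every membership query correctly. The excerpt argues this informally ("the query scheme and storage scheme ... remains unchanged" for $\U'$, and the bit-flip description for $\U''$); a fully rigorous treatment would spell out that the two sub-schemes operate on disjoint sets of data-structure bits and disjoint blocks, so their correctness is independent, and that flipping the role of tables $\B$ and $\C$ together with flipping the $\A$-bit is an exact symmetry of the $2$-adaptive-bitprobe query rule.
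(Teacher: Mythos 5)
Your proposal is correct and follows essentially the same route as the paper: the lemma is simply the summary of Section~\ref{sec:modified}'s three modifications (split along $\U'/\U''$ using Lemma~\ref{lem:good} and Observation~\ref{obs:split}, swap $\B''$ with $\C''$, then permute indices via Lemma~\ref{lem:label}), with the same size bookkeeping giving the $6s$ bound. Your closing remark about verifying that the swapped-and-relabelled parts still form a valid two-probe scheme is exactly the point the paper also treats informally, so there is no substantive difference.
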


As the third and final modification does not affect indices larger
than $2i+2$, we can say that
\begin{equation*}
	|\B_k^{(3)}| + |\C_k^{(3)}|
		\ = \ |\B_k^{(2)}| + |\C_k^{(2)}|
		\ \leq \ 2 \cdot (|\B_k^{(0)}| + |\C_k^{(0)}|),
\end{equation*}
for $k > 2i+2$. As for the indices 1 to $2i+2$, the sets have been
relabelled but not created, and as a result the total number of sets
remain unchanged, i.e.\
\begin{equation}
	\sum_{k=1}^{2i+2} \left( |\B_k^{(3)}| + |\C_k^{(3)}| \right)
		\ = \ \sum_{k=1}^{2i+2} \left( |\B_k^{(2)}|
			+ |\C_k^{(2)}| \right) \\
		\ \leq \ 2 \cdot \sum_{k=1}^{2i+2} \left( |\B_k^{(0)}|
			+ |\C_k^{(0)}| \right).
	\label{eqn:size}
\end{equation}

Finally, all of this can only be proven for subsets $\Su$ and $\X$
whose sizes are at least $4i$, and for the range of indices 1 to
$2i+2$. So, for the first condition we can set $n = 4i$. As for the range of indices, the
size of a block $b$ has to be larger than $2i+2$, which implies that
universes and path lengths are bounded by $\lfloor \frac{b}{2} \rfloor
- 1$.

				\section{Lower Bound}
				\label{sec:lower}

In this section, we will present our theorem on the space lower bound
on restricted schemes. We start by presenting an estimate of the total sizes of all the $t$-universes of good elements.

\begin{lemma}

	Suppose all the elements with index 1 are $t$-good w.r.t.\
	table $\B$. Then the sizes of their $t$-Universes satisfy
	the following inequality.
	\[
		\sum_{\bl{e} \ \in \ \A} \left| \U^t_\B(\bl{e}_1)
			\right|
		\ \ \geq \ \ c \cdot \frac{s^{t+1}}{\left( \sum_{i=1}^{t}
			(|\B_i| + |\C_i|) \right)^t},
	\]
	for some constant $c$.

	\label{lem:univ_size}

\end{lemma}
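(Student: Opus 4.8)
The plan is to lower-bound $\sum_{\bl{e} \in \A} |\U^t_\B(\bl{e}_1)|$ by unrolling the recursive definition of the $i$-Universe one level at a time, using the goodness hypothesis at each level to control how many elements of a $j$-Universe can collapse into a single set. First I would set up the base case: by Observation~\ref{obs:1univ}, $|\U^1_\B(\bl{e}_1)| = |S_\B(\bl{e}_1) \setminus \{\bl{e}_1\}| = |S_\B(\bl{e}_1)| - 1$. Summing over all blocks $\bl{e}$ and using Observation~\ref{obs:prop} (namely $\U_1 = \bigcup_{S \in \B_1} S$, with the sets partitioning $\U_1$ by Restriction~\ref{res:simple}) gives $\sum_{\bl{e} \in \A} |\U^1_\B(\bl{e}_1)| = \sum_{S \in \B_1}(|S|-1) = |\U_1| - |\B_1| = s - |\B_1|$. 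Since $|\B_1| \leq |\B_1| + |\C_1| \leq \sum_{i=1}^t(|\B_i|+|\C_i|)$, and $s$ is large compared to this only when the bound is non-trivial, this already has the right shape for $t=1$ up to constants (one can absorb the subtracted term, or assume WLOG that $s$ exceeds twice the denominator, else the claimed bound is vacuous).

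For the inductive step, suppose we have a good lower bound on $\Sigma_{j} := \sum_{\bl{e} \in \A} |\U^j_\B(\bl{e}_1)|$ and want one on $\Sigma_{j+1}$. By definition, $\U^{j+1}_\B(\bl{e}_1) = \bigcup_{\bl{u}_l \in \U^j_\B(\bl{e}_1)} \U^1_{\T'}(\bl{u}_l)$ where $\T'$ alternates appropriately; by Lemma~\ref{lem:index} every element of $\U^j_\B(\bl{e}_1)$ has index $j+1$, so all the relevant one-step universes are taken inside $\T'_{j+1}$ (either $\B_{j+1}$ or $\C_{j+1}$). The key point is that $\bl{e}_1$ being $(j{+}1)$-good w.r.t.\ $\B$ means, by Definition~\ref{def:bad}, that the elements of $\U^{j+1}_\B(\bl{e}_1)$ lie in \emph{distinct} sets of $\T'_{j+1}$ — no two of them share a set. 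Therefore
\[
	|\U^{j+1}_\B(\bl{e}_1)| \ = \ \sum_{\bl{u}_{j+1} \in \U^j_\B(\bl{e}_1)} |\U^1_{\T'}(\bl{u}_{j+1})| \ = \ \sum_{\bl{u}_{j+1} \in \U^j_\B(\bl{e}_1)} \left( |S_{\T'}(\bl{u}_{j+1})| - 1 \right),
\]
where the first equality uses that the union is disjoint (again goodness, now at level $j{+}1$, forbidding two elements of the $j$-Universe from feeding into the same set — this needs the definition applied carefully, possibly invoking $j$-goodness too). Summing over $\bl{e}$, the left side is $\Sigma_{j+1}$ and the right side is $\sum_{\bl{e}} \sum_{\bl{u} \in \U^j_\B(\bl{e}_1)} (|S_{\T'}(\bl{u})| - 1)$.

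The final step is a convexity/averaging argument. We have $\sum_{\bl{e}} \sum_{\bl{u} \in \U^j} |S_{\T'}(\bl{u})|$, a sum over $\Sigma_j$ terms (counted with multiplicity) of set-sizes drawn from $\T'_{j+1}$, whose total number of sets is at most $|\B_{j+1}| + |\C_{j+1}|$ and whose sizes sum to at most $2s$ (each of $\B_{j+1}, \C_{j+1}$ covers $\U_{j+1}$, of size $s$). To lower-bound this we want the set-sizes to be as \emph{small} as possible, but a single set can only be counted as many times as it has elements (clean sets, Restriction~\ref{res:simple}); pushing this through, the worst case spreads the $\Sigma_j$ pointers as evenly as possible over sets totaling mass $\le 2s$ in $\le |\B_{j+1}|+|\C_{j+1}|$ sets, giving roughly $\Sigma_{j+1} \gtrsim \Sigma_j \cdot \frac{s}{|\B_{j+1}| + |\C_{j+1}|}$ after subtracting the $-1$'s (which contribute at most $\Sigma_j$, absorbed into the constant or handled by the same vacuity assumption). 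Iterating from $j=1$ to $t$ and crudely bounding each $|\B_{j+1}|+|\C_{j+1}| \le \sum_{i=1}^t(|\B_i|+|\C_i|)$ yields $\Sigma_t \gtrsim s \cdot \left(\frac{s}{\sum_{i=1}^t(|\B_i|+|\C_i|)}\right)^t$, which is the claim. I expect the main obstacle to be the averaging step — correctly arguing that the adversarial distribution of pointers over sets cannot do better than the uniform one given the clean-set constraint (each set counted at most its own size many times), and carefully tracking that the $-1$ terms and the $\B_1$ correction in the base case do not degrade the exponent. A clean way to handle the lower-order terms is to assume, without loss of generality, $s \ge 2\sum_{i=1}^t(|\B_i|+|\C_i|)$, since otherwise the desired inequality holds trivially with a small enough $c$.
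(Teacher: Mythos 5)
Your unrolling identity for $|\U^{j+1}_\B(\bl{e}_1)|$ is correct (it is the one-level case of the identity the paper proves in its appendix), but the averaging step -- which you yourself flag as the main obstacle -- is a genuine gap, and the per-level recursion it is meant to deliver, namely $\Sigma_{j+1} \gtrsim \Sigma_j \cdot s/(|\B_{j+1}|+|\C_{j+1}|)$ where $\Sigma_j = \sum_{\bl{e}\in\A}|\U^j_\B(\bl{e}_1)|$, is in fact false. The justification ``a single set can only be counted as many times as it has elements (clean sets)'' does not hold: cleanness only forbids two elements of one \emph{block} from sharing a set; it does not bound the number of starting blocks $\bl{e}$ whose $j$-Universe contains a given element $\bl{u}_{j+1}$. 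Already for $j=1$ that multiplicity equals $|S_\B(\bl{u}_1)|-1$ and can be enormous, so an adversary can correlate high multiplicity with small sets at level $j+1$. Concretely, take $t=2$: let $\B_1$ consist of one set $A$ of size $s^{2/3}$ together with sets of size $2$; in $\C_2$, pair each index-2 element of an $A$-block with an index-2 element of some other block (1-goodness forces each such pair to contain at most one $A$-block element), and put all remaining index-2 elements into sets of size $s^{1/3}$; let $\B_2$ be a single huge set and choose $\B_3$ generically so that every index-1 element is 2-good. Then $\Sigma_1 \approx s^{4/3}$ and $|\B_2|+|\C_2| \approx 2s^{2/3}$, so your recursion would predict $\Sigma_2 \gtrsim s^{5/3}$, whereas in fact $\Sigma_2 \approx s^{4/3}$: the roughly $s^{4/3}$ pointers issuing from the single big $\B_1$-set all land on elements lying in size-2 sets of $\C_2$, and each such set is hit about $s^{2/3}$ times, far more than its size. (The lemma itself is not violated in this example because the denominator also contains $|\B_1| \approx s/2$.)

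This is precisely why the paper does not induct level by level. It unrolls the \emph{entire} sum $\sum_{\bl{e}}|\U^t_\B(\bl{e}_1)|$ into a nested summation over $t$-tuples of blocks (paths), using $t$-goodness to make every union disjoint, and then runs a global extraction argument: pick the smallest set $X_1$ occurring anywhere in that summation, note that every factor of every term in which $X_1$ participates has size at least $|X_1|$, so those terms contribute at least $c_1|X_1|^{t+1}$, delete the blocks of $X_1$ from the summation, and repeat. The extracted sets $X_1, X_2, \dots$ partition the blocks of $\A$ (total mass $s$) and number at most $\sum_{i=1}^{t}(|\B_i|+|\C_i|)$, after which the power-mean (Cauchy--Schwarz) inequality yields the stated bound. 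Choosing the smallest set \emph{globally} is what neutralises the multiplicity--size correlation that defeats your per-level averaging; your base case and your handling of the $-1$ terms and of the trivial regime $s \le 2\sum_{i\le t}(|\B_i|+|\C_i|)$ are fine, but they cannot repair that step.
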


\begin{proof}
In this lemma, for the sake of convenience, we introduce two new
notations. If $\bl{h}$ is a block then $\bl{h}_i$ was meant to denote that element of $\bl{h}$ which has index $i$. We now abuse the notation and use $\bl{h}_{i,j}$ to denote the element with index $j$ in
the block $\bl{h}_i$. We also introduce $P_\T(\bl{e}_k)$ to denote the set
$S_\T(\bl{e}_k) \setminus \{ \bl{e}_k \}$. These notations will help
us keep the expressions to follow succint.

Let us assume that $t$ is odd. The sum of the sizes of the $t$-universes of all index 1 elements can be expressed as follows.
\begin{align*}
	\sum_{\bl{e} \ \in \ \A} \left| \U^t_\B(\bl{e}_1) \right|
		&= \sum_{\bl{e} \in \A} \left| \left(
			\bigcup_{\bl{h}_{t-1,t} \in
			\U^{t-1}_\B(\bl{e}_1)} \U^1_\B(\bl{h}_{t-1,t})
			\right) \right|
		= \sum_{\bl{e} \in \A} \left( \sum_{\bl{h}_{t-1,t} \in
			\U^{t-1}_\B(\bl{e}_1)} \left|
			\U^1_\B(\bl{h}_{t-1,t}) \right| \right) \\
		&= \sum_{\bl{e} \ \in \ \A} \ \ \sum_{\bl{h}_{t-1,t} \
			\in \ \U^{t-1}_\B(\bl{e}_1)} \left|
			P_\B(\bl{h}_{t-1,t}) \right|
\end{align*}
The above derivation follows from the definition of $t$-universe, the fact that all elements of index 1 are $t$-good, and Observation~\ref{obs:1univ} about the size of 1-universes. We have now arrived at a
summation indexed by the elements of $\U^{t-1}_\B(\bl{e}_1)$, and
applying Lemma~\ref{app:lem:identity}, we get --
\begin{multline*}
	\sum_{\bl{e} \ \in \ \A} \left| \U^t_\B(\bl{e}_1) \right|
		\ \ = \ \ \sum_{\bl{e} \ \in \ \A} \ \
			\sum_{\bl{h}_{1,1} \ \in \ P_\B(\bl{e}_1)} \ \
			\sum_{\bl{h}_{2,2} \ \in \ P_\C(\bl{h}_{1,2})}
			\ \ \sum_{\bl{h}_{3,3} \ \in \
			P_\B(\bl{h}_{2,3})} \\
		\ \ \dots
		\ \ \sum_{\bl{h}_{t-1,t-1} \ \in \
			P_\C(\bl{h}_{t-2,t-1})} \left|
			P_\B(\bl{h}_{t-1,t}) \right|.
\end{multline*}
The summation $\sum_{\bl{e} \in \A}$ can be equivalently expressed as $\sum_{S \in \B_1} \sum_{\bl{e}_1 \in S}$. As the elements $\bl{e}_1$ and $\bl{h}_{1,1}$ both belong to the set $S$ and are distinct from each other, we can now reorder the first three indices of the summation as $\sum_{S \in \B_1} \sum_{\bl{h}_{1,1} \in S} \sum_{\bl{e}_1 \in P_\B(\bl{h}_{1,1})}$.
By pushing the summation indexed by $\bl{e}_1$ inside, we can finally rewrite down the
summation as --
\begin{multline}
	\sum_{\bl{e} \ \in \ \A} \left| \U^t_\B(\bl{e}_1) \right|
		\ \ = \ \ \sum_{\bl{h}_1 \ \in \ \A} \ \
			\sum_{\bl{h}_{2,2} \ \in \ P_\C(\bl{h}_{1,2})}
			\ \ \sum_{\bl{h}_{3,3} \ \in \
			P_\B(\bl{h}_{2,3})} \ \ \dots \\
		\ \ \sum_{\bl{h}_{t-1,t-1} \ \in \
			P_\C(\bl{h}_{t-2,t-1})} \left|
			P_\B(\bl{h}_{1,1}) \right| \cdot \left|
			P_\B(\bl{h}_{t-1,t}) \right|.
	\label{app:eqn:sum}
\end{multline}

Each term of this summation is determined by a tuple such as
\[
	\left( \bl{h}_1 \in \A, \ \bl{h}_{2,2} \in P_\C(\bl{h}_{1,2}),
		\ \bl{h}_{3,3} \in P_\B(\bl{h}_{2,3}), \ \dots, \
		\bl{h}_{t-1,t-1} \in P_\C(\bl{h}_{t-2,t-1}) \right)
\]
where each block, except for the first, is dependent on the previous
blocks. On the other hand, if any of the sets in the tuple is fixed,
then the other blocks and the terms of the summation they index are
fixed by the set. With this insight, we are going to put a lower bound
on the sum of all $t$-universes.

Suppose $X_1$ be the smallest set that occurs in the summation above
(Equation~\ref{app:eqn:sum}), either as one of its terms or as one of its
indices. We first consider the case when $X_1$ occurs as the index under
the $i$\textsuperscript{th} summation. Let us also consider that $i$
is odd, which would imply that $X_1$ belongs to table $\B$. Thus the
terms of the summation in which $X_1$ participates is determined as
follows -- the indices under the $i$\textsuperscript{th} summation and
beyond is determined as
\[
	\left( \ \bl{h}_{i,i} \in X_1, \ \bl{h}_{i+1,i+1} \in
		P_\C(\bl{h}_{i,i+1}), \ \bl{h}_{i+2,i+2} \in
		P_\B(\bl{h}_{i+1,i+2}), \ \dots \ \right),
\]
and the indices prior to that is determined as
\[
	\left( \ \bl{h}_{i-1,i} \in P_\B(\bl{h}_{i,i}), \
		\bl{h}_{i-2,i-1} \in P_\C(\bl{h}_{i-1,i-1}), \
		\ \dots,
	\ \bl{h}_{2,3} \in P_\B(\bl{h}_{3,3}), \ \bl{h}_{1,2}
		\in P_\C(\bl{h}_{2,2}) \ \right).
\]
It is important to note that in the latter of the two tuples,
$P_\B(\bl{h}_{i,i})$ is the set $X_1 \setminus \{ \bl{h}_{i,i} \}$.

The sum of all the terms in which the set $X_1$ participates is as follows.
\begin{multline*}
	\sum_{\bl{h}_{i-1,i} \in P_\B(\bl{h}_{i,i})}
		\ \ \sum_{\bl{h}_{i-2,i-1} \in P_\C(\bl{h}_{i-1,i-1})}
		\ \ \sum_{\bl{h}_{i-3,i-2} \in P_\B(\bl{h}_{i-2,i-2})}
		\dots
		\ \ \sum_{\bl{h}_{2,3} \in P_\B(\bl{h}_{3,3})}
		\ \ \sum_{\bl{h}_{1,2} \in P_\C(\bl{h}_{2,2})} \\
		\left(
		\ \ \sum_{\bl{h}_{i,i} \in X_1}
		\ \ \sum_{\bl{h}_{i+1,i+1} \in P_\C(\bl{h}_{i,i+1})}
		\dots \right.
		\left. \sum_{\bl{h}_{t-1,t-1} \in P_\C(\bl{h}_{t-1,t-2})}
		\left( \left| P_\B(\bl{h}_{1,1}) \right| \cdot \left|
			P_\B(\bl{h}_{t-1,t}) \right| \right) \right)
\end{multline*}
As all of the sets involved have sizes $\geq |X_1|$, the above sum is at least $c_1 \cdot |X_1|^{t+1}$, for some constant $c_1$. From the remaining terms and index sets of the summation in Equation~\ref{app:eqn:sum}, we remove all the blocks that belong to set $X_1$. So, if the initial sum in Equation~\ref{app:eqn:sum} is denoted by $\su_0$, and the remaining sum after the above procedure is $\su_1$, we have
\[
	\su_0 \ \ \geq \ \ \su_1 \ + \ c_1 |X_1|^{t+1}.
\]
We next identify the smallest set, say $X_2$, in the summation $\su_1$ and repeat the above proceduce which ends up in an estimation of all the terms associated with $X_2$, the estimation being $\geq c_2 \cdot |X_2|^{t+1}$, and removing the terms and blocks associated with $X_2$ from the remainder. We repeat this until all the blocks have thus been removed, upon which we will have a family of sets labelled $X_i$s and they partition the blocks of table $\A$. The number of sets would, in the worst case, be the total number of sets in the tables $\B$ and $\C$ with index at most $t$. Consequently, we have
\[
	\sum_{\bl{e} \ \in \ \A} \left| \U^t_\B(\bl{e}_1) \right|
		\geq c \cdot \sum_{i} \left| X_i \right|^{t+1}
		\geq c \cdot \sum_i \left( \frac{ \sum_i \left| X_i \right| }{\sum_i 1} \right)^{t+1}
		\geq c \cdot \frac{s^{t+1}}{\left( \sum_{i=1}^{t} (|\B_i| + |\C_i|) \right)^t},
\]
where $c$ is some suitable constant. The final bound arises using the Cauchy-Schwarz inequality.

All the other scenarios including the case where $t$ is presumed to be even, can be similarly argued. \qed
\end{proof}

\begin{lemma}

	If all elements of $\U_1$ are $t$-good w.r.t.\ table $\B$, then
	\[
		\sum_{j=1}^{t+1} \left( |\B_j| + |\C_j| \right)
			\geq c \cdot s^\frac{t}{t+1},
	\]
	for some constant $c$.

	\label{lem:size}

\end{lemma}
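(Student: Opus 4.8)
The plan is to sandwich the quantity $\sum_{\bl{e} \in \A} |\U^t_\B(\bl{e}_1)|$ between the lower bound of Lemma~\ref{lem:univ_size} and an easy upper bound that follows from the hypothesis that every index-$1$ element is $t$-good, and then solve the resulting inequality for $\sum_{j=1}^{t+1}(|\B_j| + |\C_j|)$.

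For the upper bound, fix a block $\bl{e}$ and consider $\U^t_\B(\bl{e}_1)$. By Lemma~\ref{lem:index} every element of this set has index $t+1$. By the definition of a $t$-good element (Definition~\ref{def:bad}, instantiated with $j = t$), no two distinct members of $\U^t_\B(\bl{e}_1)$ can lie in the same set of the table $\T'$ relevant to step $t$ — namely $\T' = \B$ if $t$ is even and $\T' = \C$ if $t$ is odd. Hence $\U^t_\B(\bl{e}_1)$ meets each such set at most once, so $|\U^t_\B(\bl{e}_1)| \leq |\T'_{t+1}| \leq |\B_{t+1}| + |\C_{t+1}|$. Summing over the (at most $s$) blocks of $\A$, via Observation~\ref{obs:prop}, gives
\[
	\sum_{\bl{e} \in \A} \left| \U^t_\B(\bl{e}_1) \right| \ \leq \ s \cdot \bigl( |\B_{t+1}| + |\C_{t+1}| \bigr).
\]

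Combining this with Lemma~\ref{lem:univ_size} yields
\[
	s \cdot \bigl( |\B_{t+1}| + |\C_{t+1}| \bigr) \ \geq \ c \cdot \frac{s^{t+1}}{\bigl( \sum_{i=1}^{t} (|\B_i| + |\C_i|) \bigr)^{t}},
\]
and, cancelling one factor of $s$,
\[
	\bigl( |\B_{t+1}| + |\C_{t+1}| \bigr) \cdot \Bigl( \sum_{i=1}^{t} (|\B_i| + |\C_i|) \Bigr)^{t} \ \geq \ c \cdot s^{t}.
\]
Setting $Y = \sum_{j=1}^{t+1}(|\B_j| + |\C_j|)$, we have both $|\B_{t+1}| + |\C_{t+1}| \le Y$ and $\sum_{i=1}^{t}(|\B_i| + |\C_i|) \le Y$, so the left-hand side is at most $Y^{t+1}$. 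Therefore $Y^{t+1} \ge c \cdot s^{t}$, i.e. $Y \ge c' \cdot s^{t/(t+1)}$ for a suitable constant $c'$, which is exactly the claimed inequality.

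There is no genuine obstacle here; the only step requiring care is the upper bound, where one must (i) track the parity of $t$ correctly to decide which of $\B_{t+1}$, $\C_{t+1}$ the index-$(t+1)$ elements are distributed among, and (ii) invoke the $t$-good hypothesis precisely — it is the failure of the $j=t$ clause of Definition~\ref{def:bad} that prevents two members of $\U^t_\B(\bl{e}_1)$ from colliding in a single set. Everything after the sandwiching is elementary arithmetic, and the choice $j = t$ (rather than a smaller intermediate index) is what makes the exponents line up to give $s^{t/(t+1)}$.
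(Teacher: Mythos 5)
Your proposal is correct and follows essentially the same route as the paper: bound each $|\U^t_\B(\bl{e}_1)|$ by the number of sets of the relevant table with index $t+1$ (using $t$-goodness at $j=t$), sum over the $s$ blocks, combine with Lemma~\ref{lem:univ_size}, and solve for $\sum_{j=1}^{t+1}(|\B_j|+|\C_j|)$. The only cosmetic difference is that you absorb both parities at once via $|\B_{t+1}|+|\C_{t+1}|$, whereas the paper fixes $t$ odd (so the bound is $s\cdot|\C_{t+1}|$) and notes the even case is symmetric.
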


\begin{proof}

	As before, we will establish the statement of the lemma assuming that $t$ is odd. The case when $t$ is even will follow similarly. According to the definition of bad elements (Definition~\ref{def:bad}), a necessary property for an element to be $t$-good w.r.t.\ table $\B$ is that the elements of its $t$-Universe belong to distinct sets in $\C_{t+1}$, $t$ being odd. Consequently, we have
	\[
		\sum_{\bl{e}_1 \ \in \ \U_1} |\U^t_\B(\bl{e}_1)|
			\ \leq \ \sum_{\bl{e}_1 \ \in \ \U_1} |\C_{t+1}|
			\ = \ s \cdot |\C_{t+1}|
			\ \ \textnormal{ (Observation~\ref{obs:prop}) }
	\]
	From Lemma~\ref{lem:univ_size}, the inequalities follows.
	\begin{align*}
		s \cdot |\C_{t+1}|
			\ \geq \ \sum_{\bl{e}_1 \in \U_1} |\U^t_\B(\bl{e}_1)|
			\ &\geq \ \ c \cdot \frac{s^{t+1}}{\left( \sum_{i=1}^{t}
			(|\B_i| + |\C_i|) \right)^t} \\
		\implies \left( \sum_{i=1}^{t+1} \left( |\B_i| + |\C_i| \right) \right)^{t+1}
			&\geq c \cdot s^t,
	\end{align*}
	and the lemma follows. \qed

\end{proof}

Lemma~\ref{lem:modified} states that given a restricted $(n, m, 3s, 2)$-scheme, it can be modified into a $(n, m, 6s, 2)$-scheme such that in the modified scheme all the elements of $\U_1$ are $i$-good for some constant $i$. Furthermore, in that case we require the subset size, $n$, should be at least $4i$. So, from Equation~\ref{eqn:size} and Lemma~\ref{lem:size}, we can deduce the following.
\begin{align}
	\sum_{k=1}^{2i+2} \left( |\B_k^{(0)}| + |\C_k^{(0)}| \right)
		\ &\geq \ \frac{1}{2} \sum_{k=1}^{2i+2} \left( |\B_k^{(2)}| + |\C_k^{(2)}| \right)
		\ = \ \frac{1}{2}\sum_{k=1}^{2i+2} \left( |\B_k^{(3)}| + |\C_k^{(3)}| \right)
		\nonumber \\
		&\geq \ c \cdot (2s)^\frac{i}{i+1},
	\label{eqn:table}
\end{align}
where $2s$ comes from the fact that the first modification splits the sets of tables $\B$ and $\C$ (Observation~\ref{obs:split}).

Let the indices in the original scheme be so chosen that the sum on the first $2i+2$ indices in Equation~\ref{eqn:table} is the minimum among all choices. We can then derive the following.
\begin{align*}
	\sum_{k=1}^{b} \left( |\B_k^{(0)}| + |\C_k^{(0)}| \right)
		\ &\geq \ \frac{b}{2i+2} \sum_{k=1}^{2i+2} \left( |\B_k^{(0)}| + |\C_k^{(0)}| \right) \\
	\implies 2 \cdot s \ &\geq \ \frac{1}{2i+2} \frac{m}{s} \cdot c \cdot (2s)^\frac{i}{i+1},
\end{align*}
which upon simplification gives us
\[
	s \ \geq \ c' \cdot \left(\frac{m}{n}\right)^{1 - \frac{1}{\lfloor n/4 \rfloor + 2}},
\]
for some suitable constant $c'$. Hence, the main result of the paper is as follows.
\begin{theorem}

	Two adaptive bitprobe schemes for storing subsets of size at most $n$ and satisfying Restriction~\ref{res:index} require $\Omega(\left(\frac{m}{n}\right)^{1 - \frac{1}{\lfloor n / 4 \rfloor + 2}})$ space.

	\label{thm:lower}

\end{theorem}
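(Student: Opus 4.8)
The plan is to assemble the stated bound from three ingredients already established: the reduction of an arbitrary restricted scheme to a \emph{good} one (Lemma~\ref{lem:modified}), the lower bound on the number of low-index sets of a good scheme (Lemma~\ref{lem:size}, which itself rests on the Cauchy--Schwarz estimate of $\sum_{\bl{e}} |\U^t_\B(\bl{e}_1)|$ in Lemma~\ref{lem:univ_size}), and the size bookkeeping of Section~\ref{sec:modified} (Observation~\ref{obs:split} and Equation~\ref{eqn:size}) that controls how much the modification inflates the tables; the glue between them is a pigeonhole argument over the index relabelings permitted by Lemma~\ref{lem:label}.

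First I would put $i = \lfloor n/4 \rfloor$ and take an arbitrary restricted $(n, m, 3s, 2)$-scheme. Since the forbidding subsets $\Su_1,\X_1,\Su_2,\X_2$ built in Section~\ref{sec:modified} each have size at most $2i$, are pairwise disjoint, and hence have union of size at most $4i \le n$ --- and provided $b \ge 2i+2$, so every universe and path invoked stays within the admissible length $\lfloor b/2 \rfloor - 1$ --- Lemma~\ref{lem:modified} yields a $(n, m, 6s, 2)$-scheme in which every element of $\U_1$ is $i$-good w.r.t.\ table $\B$. Applying Lemma~\ref{lem:size} to this modified scheme gives
\[
	\sum_{j=1}^{i+1} \left( |\B_j| + |\C_j| \right) \ \geq \ c \cdot s^{\frac{i}{i+1}},
\]
and transporting this back through the relabelings and Equation~\ref{eqn:size} shows that, up to a constant factor, the sets of the \emph{original} tables $\B$ and $\C$ with index in $\{1,\dots,2i+2\}$ are at least this numerous, i.e.\ $\sum_{k=1}^{2i+2}(|\B_k^{(0)}| + |\C_k^{(0)}|) = \Omega((2s)^{i/(i+1)})$ --- this is Equation~\ref{eqn:table}.

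By Lemma~\ref{lem:label} I may choose the permutation of the $b = m/s$ indices so that the $2i+2$ indices appearing in Equation~\ref{eqn:table} minimise $\sum_{k=1}^{2i+2}(|\B_k^{(0)}| + |\C_k^{(0)}|)$ over all windows of that size. Partitioning all $b$ indices into such windows, summing, and using that the $\B_k$ (resp.\ $\C_k$) partition the sets of $\B$ (resp.\ $\C$), so that the grand total is at most $2s$, I get
\[
	2s \ \geq \ \sum_{k=1}^{b} \left( |\B_k^{(0)}| + |\C_k^{(0)}| \right)
		\ \geq \ \frac{b}{2i+2}\sum_{k=1}^{2i+2} \left( |\B_k^{(0)}| + |\C_k^{(0)}| \right)
		\ \geq \ \frac{c}{2i+2}\cdot\frac{m}{s}\cdot(2s)^{\frac{i}{i+1}}.
\]
Multiplying through by $s$ and rearranging yields $(2s)^{\frac{i+2}{i+1}} = \Omega\!\left(\frac{m}{i}\right)$, hence $s = \Omega\!\left(\left(\frac{m}{n}\right)^{1 - \frac{1}{\lfloor n/4 \rfloor + 2}}\right)$ after substituting $i = \lfloor n/4 \rfloor$ and absorbing the factor $i \le n$; comparing the resulting exponent with $1 - \frac{1}{\lfloor n/4\rfloor}$ gives the claimed improvement over the general lower bound of Garg and Radhakrishnan in the stated range of $n$.

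The assembly above is essentially a two-line calculation once the lemmas are in hand, so I expect the real work to lie in those lemmas --- chiefly Lemma~\ref{lem:bad}, which converts an $i$-bad element into a block that cannot be stored and which genuinely relies on $\Su$ and $\X$ being disjoint, and Lemma~\ref{lem:index} together with Lemma~\ref{lem:univ}, which pin down the index of every member of an $i$-universe and realise it by a path. These are exactly the points where the passage from $n=2$ to general $n$ is not automatic; a secondary subtlety worth verifying carefully is that the four forbidding sets in Section~\ref{sec:modified} really are pairwise disjoint (because their index ranges $[1,i+1]$ and $[i+2,2i+2]$ do not overlap), which is what keeps their union a legal subset of size at most $4i = n$.
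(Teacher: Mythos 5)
Your assembly is essentially the paper's own proof: the same chain of Lemma~\ref{lem:modified}, Equation~\ref{eqn:size}, Lemma~\ref{lem:size} (via Lemma~\ref{lem:univ_size}), followed by the same minimal-window averaging over the $b = m/s$ indices justified by Lemma~\ref{lem:label}, and the same final rearrangement to $(2s)^{(i+2)/(i+1)} = \Omega(m/n)$ with $i = \lfloor n/4 \rfloor$. The minor wobble between $s^{i/(i+1)}$ and $(2s)^{i/(i+1)}$ only affects constants, so nothing is amiss.
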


Comparing our result in this restricted setting with the bound proposed by Garg
and Jaikumar~\cite{mohit:few} for all schemes, we see that our result improves on~\cite{mohit:few} for $n \leq c \cdot \sqrt{\frac{\log m}{\log n}}$.
\\
Our lower bound is better if the following holds --
\[ c_1 \cdot \left(\frac{m}{n}\right)^{1 - \frac{1}{\lfloor n / 4 \rfloor + 2}} \geq c_2 \cdot m^{1-\frac{1}{\lfloor n / 4 \rfloor}} \]
Taking logarithm on both sides, we have
\[ \left( \frac{1}{\frac{n}{4}} - \frac{1}{(\frac{n}{4}+2)} \right) \log m \geq c' \cdot \left( 1 - \frac{1}{(\frac{n}{4}+2)} \right) \log n \]
for some constant $c'$. \\
Upon further simplification,
\[ \log m \geq c' \cdot n^2 \log n \]
\[ n \leq c \cdot \sqrt{\frac{\log m}{\log n}} \]
for some constant $c$, and thus our claim holds.

				\section{Conclusion}

In this paper, we addressed a class of schemes, as devised by Kesh and
Sharma~\cite{kesh:dam}, in the two adaptive bitprobe model and
provided a space lower bound on such schemes for subsets of arbitrary
sizes, thereby generalising the lower bound presented in that paper. As discussed earlier, one of the key lemmas that our lower bound proof hinges upon is Lemma~\ref{lem:bad}, which demonstrates the generation of bad elements, and establishing this lemma is crucial in generalising the proof to arbitrary schemes. We hope that this issue can be resolved and the structure of our proof could serve as a template to provide bounds stronger that those presented by Garg and Jaikumar~\cite{mohit:few}.

				\appendix

\section{Appendix}

\begin{lemma}

	Suppose for every node in a path one of the terms of the node
	is in $\Su$ and the other in $\X$. Then, if the antecedent of
	the first node is stored in its own table, antecedents of all
	the nodes will have to be stored in their respective tables
	and the consequents of the nodes cannot be stored in their
	respective tables.

	\label{app:lem:path}

\end{lemma}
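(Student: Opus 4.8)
The plan is to prove the statement by induction on the position of a node along the path, using Observation~\ref{obs:node} as the single engine of the argument and the fact that a block of table $\A$ is stored in exactly one of the two tables $\B$ or $\C$ as the mechanism that propagates the constraint from one node to the next.

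Write the path as
\[
	N_1 \ \overset{\bl{g}^{(1)}}{\longrightarrow} \ N_2 \
		\overset{\bl{g}^{(2)}}{\longrightarrow} \ \cdots \
		\overset{\bl{g}^{(p-1)}}{\longrightarrow} \ N_p,
\]
where $N_j$ lives in table $\T_j$, has antecedent $\bl{x}^{(j)}$ and consequent $\bl{y}^{(j)}$, the tables $\T_j$ alternate between $\B$ and $\C$ (Definition~\ref{def:edge}), and by the definition of an edge the consequent $\bl{y}^{(j)}$ and the antecedent $\bl{x}^{(j+1)}$ lie in the common block $\bl{g}^{(j)}$. The claim I would establish, for every $j = 1, \ldots, p$, is: the antecedent $\bl{x}^{(j)}$ is stored in $\T_j$, and the consequent $\bl{y}^{(j)}$ is not stored in $\T_j$. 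The two halves of the lemma are then exactly the union of these statements over all $j$.

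For the base case $j=1$, the hypothesis of the lemma supplies that $N_1$ has one term in $\Su$ and the other in $\X$, and we are given that its antecedent is stored in $\T_1$; Observation~\ref{obs:node} then immediately yields that $\bl{y}^{(1)}$ cannot be stored in $\T_1$. For the inductive step, suppose the claim holds at position $j$. Since $\bl{y}^{(j)}$ is not stored in $\T_j$, its block $\bl{g}^{(j)}$ is not stored in $\T_j$ (a block is stored in a table iff the bit in $\A$ for that block routes there); because the only two candidate tables are $\B$ and $\C$ and $\T_{j+1} \neq \T_j$, the block $\bl{g}^{(j)}$ must be stored in $\T_{j+1}$, so in particular $\bl{x}^{(j+1)}$ is stored in its own table $\T_{j+1}$. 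Applying Observation~\ref{obs:node} to $N_{j+1}$, whose two terms are again split between $\Su$ and $\X$ by hypothesis, its consequent $\bl{y}^{(j+1)}$ cannot be stored in $\T_{j+1}$. This closes the induction.

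I do not expect a genuine obstacle here: the content is entirely carried by Observation~\ref{obs:node} together with the binary nature of the bit in table $\A$. The only points requiring care are (i) tracking the alternation of the tables $\T_j$ so that the ``exactly one of $\B, \C$'' step is legitimate at each edge, and (ii) noting, via Observation~\ref{obs:path}, that no element recurs along the path, so the per-node splits of terms between $\Su$ and $\X$ postulated in the hypothesis are mutually consistent and the repeated invocations of Observation~\ref{obs:node} never clash.
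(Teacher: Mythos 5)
Your proposal is correct and follows essentially the same route as the paper's proof: the paper's ``continuing thus'' is exactly your induction on the node position, with Observation~\ref{obs:node} forcing the consequent out of its table, the binary choice of $\B$ versus $\C$ pushing the shared block into the other table so the next antecedent is stored in its own table, and Observation~\ref{obs:path} guaranteeing the per-node assignments to $\Su$ and $\X$ are consistent. No gaps.
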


\begin{proof}

	It is to be noted that it is actually possible to put one
	element of each node in the set $\Su$ and the other in $\X$
	such that the two sets remain disjoint. It is due to the
	simple fact, summarised in Observation~\ref{obs:path}, that no
	element occurs more than once in a path.

	Let the path under consideration be
	\[
		(\bl{e}_k, \bl{f}_k)_{\T_1} \
			\overset{\bl{f}}{\longrightarrow} \
			(\bl{f}_{k+1}, \bl{g}_{k+1})_{\T_2} \
			\overset{\bl{g}}{\longrightarrow} \
			(\bl{g}_{k+2}, \bl{h}_{k+2})_{\T_1} \
			\overset{\bl{h}}{\longrightarrow} \
			\dots
	\]
	Furthermore, we are saving the element $\bl{e}_k$ in its own
	table, namely $\T_1$.

	Observation~\ref{obs:node} tells us that as one of the terms
	of the first node is in $\Su$ and the other in $\X$, the
	consequent $\bl{f}_k$ cannot be stored in $\T_1$. Thus the
	element $\bl{f}_k$ and its block $\bl{f}$ will have to be
	stored in table $\T_2$. For the second node the antecedent
	$\bl{f}_{k+1}$ is being stored in its own table, and, as
	before, one of its terms is in $\Su$ and the other in $\X$.
	So, the consequent of the second node cannot be stored in its
	own table $\T_2$. This again implies that the element
	$\bl{g}_{k+1}$ and its block $\bl{g}$ will have to be stored
	in $\T_1$, and thus the antecedent of the third node is being
	stored in its own table.

	Continuing thus, we see that for every node in the path the
	antecedent is being stored in its own table and the consequent
	is not, which establishes the statement of the lemma. \qed

\end{proof}

\begin{lemma}

	If an element $\bl{x}_l$ belongs to the $i$-Universe of
	$\bl{e}_k$, then $l = k + i$.

	\label{app:lem:index}

\end{lemma}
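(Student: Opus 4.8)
The plan is a straightforward induction on $i$, mirroring the two-case recursive structure of Definition~\ref{def:univ}.

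For the base case $i = 1$, I would argue directly from the definition of the $1$-Universe: every element of $\U^1_\T(\bl{e}_k)$ has the form $\bl{u}_{k+1}$ for some $\bl{u}_k \in S_\T(\bl{e}_k) \setminus \{ \bl{e}_k \}$, so its index is $k+1 = k+i$, with the understanding (as fixed after Lemma~\ref{lem:label}) that all index arithmetic is taken $\pmod b + 1$. It is worth isolating this as a small standalone fact — \emph{passing to a $1$-Universe bumps the index up by exactly one, irrespective of which of the two tables the $1$-Universe is taken with respect to} — since it is precisely what gets reused in the inductive step.

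For the inductive step, assume the claim for $i-1$, i.e.\ every element of $\U^{i-1}_\T(\bl{e}_k)$ has index $k+i-1$. By definition, $\U^i_\T(\bl{e}_k) = \bigcup_{\bl{u}_l \,\in\, \U^{i-1}_\T(\bl{e}_k)} \U^1_{\T'}(\bl{u}_l)$, where $\T'$ equals $\T$ or not according to the parity of $i$. Fix an arbitrary $\bl{x}_{l'} \in \U^i_\T(\bl{e}_k)$; then $\bl{x}_{l'} \in \U^1_{\T'}(\bl{u}_l)$ for some $\bl{u}_l \in \U^{i-1}_\T(\bl{e}_k)$. The inductive hypothesis gives $l = k+i-1$, and the base-case fact applied to $\U^1_{\T'}(\bl{u}_l)$ gives $l' = l+1 = k+i$. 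As $\bl{x}_{l'}$ was arbitrary, every element of $\U^i_\T(\bl{e}_k)$ has index $k+i$, completing the induction.

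There is no genuine obstacle here; the proof is routine. The only two points that need a moment's care are: (i) the parity-dependent switch between $\T$ and $\T'$ in the recursion is irrelevant to the index bookkeeping, because the index increases by one whichever table the $1$-Universe lives in — this is exactly why I phrase the base-case fact table-agnostically; and (ii) the ``$+1$'' must be read under the $\pmod b + 1$ convention, so the statement remains meaningful and consistent at the wraparound. Both are dispatched simply by appealing to conventions already established in the paper.
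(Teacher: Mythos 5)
Your proof is correct and follows essentially the same route as the paper's: induction on $i$, with the base case read off directly from Definition~\ref{def:univ} and the inductive step obtained by noting that an element of the $i$-Universe lies in the $1$-Universe of some element of the $(i-1)$-Universe, whose index is pinned down by the hypothesis. Your explicit remark that the table switch between $\T$ and $\T'$ is irrelevant to the index bookkeeping is a minor clarification the paper leaves implicit, but the argument is the same.
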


\begin{proof}

	We will prove the statement by using induction on $i$.
	Definition~\ref{def:univ} tells us that any element in
	$\U^1_\T(\bl{e}_k)$ has index $k+1$, so the lemma is trivially
	true for the base case.

	Suppose that the statement is true for all $i$ between 1 and
	$t$, inclusive. Let us now consider an element $\bl{x}_l$ in
	the $(t+1)$-Universe of $\bl{e}_k$. Then, according to
	Definition~\ref{def:univ} $\bl{x}_l$ belongs to the 1-Universe
	of some element in $\U^t_\T(\bl{e}_k)$. By induction
	hypothesis, the index of that element in the $t$-Universe is
	$k+t$. Let that element in $\U^t_\T(\bl{e}_k)$ be
	$\bl{u}_{k+t}$. As $\bl{x}_l$ belongs to the 1-Universe of
	$\bl{u}_{k+t}$, we have shown earlier that $l$ must be equal
	to $k+t+1$. This completes the induction. \qed

\end{proof}

\begin{lemma}

	If the element $\bl{x}_{k+i}$ belongs to the $i$-Universe of
	$\bl{e}_k$ w.r.t.\ table $\T$, then there is a path such that

	\begin{enumerate}

		\item The first node is in table $\T$ with its
			antecedent being $\bl{e}_k$.

		\item The last node is in table $\T'$ with its
			antecedent being $\bl{x}_{k+i}$. The table
			$\T'$ is defined as follows.
			\[ \begin{array}{rl}
				\T = \T', & \textnormal{if $i$ is even} \\
				\T \neq \T', & \textnormal{otherwise}.
			\end{array} \]

		\item The length of the path is $i$.

	\end{enumerate}

	\label{app:lem:univ}

\end{lemma}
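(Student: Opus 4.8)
The plan is to prove the statement by induction on $i$, constructing the path explicitly and, at the inductive step, reusing the path produced one level down.

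For the base case $i = 1$, I would unfold $\bl{x}_{k+1} \in \U^1_\T(\bl{e}_k)$ via Definition~\ref{def:univ} to $\bl{x}_k \in S_\T(\bl{e}_k) \setminus \{\bl{e}_k\}$, so that $(\bl{e}_k, \bl{x}_k)_\T$ is a node. Since the table $\T' \neq \T$ has no singleton sets (Restriction~\ref{res:simple}), I can pick $\bl{z}_{k+1} \in S_{\T'}(\bl{x}_{k+1}) \setminus \{\bl{x}_{k+1}\}$ and form the node $(\bl{x}_{k+1}, \bl{z}_{k+1})_{\T'}$; Definition~\ref{def:edge} then gives an edge from the first node to the second (different tables, index up by one, common block $\bl{x}$). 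This two-node path has length $1$ and the required endpoints, matching the ``$i$ odd'' case.

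For the inductive step, assuming the claim for some $i \geq 1$ and given $\bl{x}_{k+i+1} \in \U^{i+1}_\T(\bl{e}_k)$, I would read off from Definition~\ref{def:univ} an element $\bl{u}_l \in \U^i_\T(\bl{e}_k)$ with $\bl{x}_{k+i+1} \in \U^1_{\hat\T}(\bl{u}_l)$, where $\hat\T = \T$ precisely when $i+1$ is odd. Lemma~\ref{lem:index} forces $l = k+i$, and unfolding the $1$-Universe gives $\bl{x}_{k+i} \in S_{\hat\T}(\bl{u}_{k+i}) \setminus \{\bl{u}_{k+i}\}$. Applying the induction hypothesis to $\bl{u}_{k+i}$ yields a length-$i$ path whose first node is in $\T$ with antecedent $\bl{e}_k$ and whose last node is in some table $\T_i$ (with $\T_i = \T$ iff $i$ is even) and has antecedent $\bl{u}_{k+i}$; a parity check gives $\T_i = \hat\T$. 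I would then (i) replace the last node of this path by $(\bl{u}_{k+i}, \bl{x}_{k+i})_{\T_i}$, which is a genuine node since $\bl{x}_{k+i} \in S_{\T_i}(\bl{u}_{k+i}) \setminus \{\bl{u}_{k+i}\}$, and (ii) append a node $(\bl{x}_{k+i+1}, \bl{z}_{k+i+1})_{\T_{i+1}}$ with $\T_{i+1} \neq \T_i$ and $\bl{z}_{k+i+1} \in S_{\T_{i+1}}(\bl{x}_{k+i+1}) \setminus \{\bl{x}_{k+i+1}\}$ (again available because there are no singleton sets). The edge from the modified last node to the new node is valid (different tables, index up by one, common block $\bl{x}$), so the extended sequence is a path of length $i+1$ with the right endpoints, and since $\T_{i+1} \neq \T_i$ it satisfies the parity condition demanded for $i+1$.

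The one point that needs justification, and which I expect to be the main (though minor) obstacle, is step (i): the induction hypothesis only supplies a path ending at \emph{some} node with antecedent $\bl{u}_{k+i}$, not necessarily at $(\bl{u}_{k+i}, \bl{x}_{k+i})_{\T_i}$. The fix is to observe that, by Definition~\ref{def:edge}, the edge entering the last node of a path depends only on that node's antecedent block, its index, and its table --- never on its consequent --- so swapping the consequent to $\bl{x}_{k+i}$ keeps every earlier edge intact (and there is no later edge). Everything else is routine bookkeeping of the index arithmetic $\pmod{b}+1$ and of parities, together with the remark that the length bound $i \leq \lfloor b/2 \rfloor - 1$ on the universes we consider is inherited by the path we build.
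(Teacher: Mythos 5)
Your proposal is correct and follows essentially the same route as the paper's proof: induction on $i$, with the same base-case construction and the same inductive step of extending the length-$i$ path for $\bl{u}_{k+i}$ by one edge into the opposite table. Your step (i) --- choosing the consequent of the last node to be $\bl{x}_{k+i}$ and noting that edges never depend on consequents --- is exactly what the paper does implicitly when it ``sets $\bl{y}_{k+t} = \bl{x}_{k+t}$,'' so you have merely made that point explicit.
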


\begin{proof}

	We will prove the aforementioned statement by induction on
	$i$. Consider the 1-Universe of $\bl{e}_k$ and $\bl{x}_{k+1}
	\in \U^1_{\T}(\bl{e}_k)$. By the definition of 1-Universe
	(Definition~\ref{def:univ}), $\bl{x}_k$ belongs to
	$S_\T(\bl{e}_k) \setminus \{ \bl{e}_k \}$. Furthermore,
	consider an element $\bl{y}_{k+1}$ in the set
	$S_{\T'}(\bl{x}_{k+1}) \setminus \{ \bl{x}_{k+1} \}$, where
	$\T' \neq \T$. The path to consider in such a scenario is
	\[
		(\bl{e}_k, \bl{x}_k)_{\T} \
			\overset{\bl{x}}{\longrightarrow} \
			(\bl{x}_{k+1}, \bl{y}_{k+1})_{\T'}.
	\]
	which satisfies all the requirements of the lemma. Thus the
	base case of the induction is proven.

	Let us assume that the statement of the lemma holds for all
	values of $i$ between 1 and $t$, inclusive. Let us further
	suppose that $t$ is odd. Consider an element $\bl{x}_{k+t+1}$
	in the $(t+1)$-Universe of $\bl{e}_k$. The index of the
	element is set by Lemma~\ref{lem:index}. According to the
	definition of $(t+1)$-Universe (Definition~\ref{def:univ}),
	there must exist an element $\bl{u}_{k+t}$ in the $t$-Universe
	of $\bl{e}_k$ s.t.\ $\bl{x}_{k+t+1} \in
	\U^1_{\T'}(\bl{u}_{k+t}).$ The index of the 1-Universe is $\T'
	\neq \T$ because of the definition of $(t+1)$-Universe and the
	fact that $t+1$ has been assumed to be even. In this scenario,
	we have already established that there will exist an edge of
	the form
	\[
		(\bl{u}_{k+t}, \bl{x}_{k+t})_{\T'} \
			\overset{\bl{x}}{\longrightarrow} \
			(\bl{x}_{k+t+1}, \bl{z}_{k+t+1})_{\T},
	\]
	for some $\bl{z}_{k+t+1} \in S_{\T}(\bl{x}_{k+t+1}) \setminus
	\{ \bl{x}_{k+t+1} \}$.

	As $\bl{u}_{k+t} \in \U^t_\T(\bl{e}_k)$, by induction
	hypothesis there exists a path of length $t$ of the form
	\[
		(\bl{e}_k, \bl{f}_k)_{\T} \
			\overset{\bl{f}}{\longrightarrow} \
			\dots \
			\overset{\bl{u}}{\longrightarrow} \
			(\bl{u}_{k+t}, \bl{y}_{k+t})_{\T'}.
	\]
	By induction hypothesis, the path will end in $\T' \neq \T$ as
	$t$ is odd.

	Combining the two paths above by setting $\bl{y}_{k+t} =
	\bl{x}_{k+t}$, we get the desired path of length $t+1$.
	\[
		(\bl{e}_k, \bl{f}_k)_{\T} \
			\overset{\bl{f}}{\longrightarrow} \
			\dots \
			\overset{\bl{u}}{\longrightarrow} \
			(\bl{u}_{k+t}, \bl{x}_{k+t})_{\T'} \
			\overset{\bl{x}}{\longrightarrow} \
			(\bl{x}_{k+t+1}, \bl{z}_{k+t+1})_{\T}.
	\]

	The case for $t$ being even can be similarly argued with due
	consideration for the fact that in this scenario the desired
	path will end in $\T' \neq \T$. \qed

\end{proof}

\begin{lemma}

	If an element $\bl{e}_k$ is $i$-bad w.r.t.\ table $\T$, then
	there exists a choice of the sets $\Su$ and $\X$, each of size
	at most $2i$, s.t.\ the block $\bl{e}$ cannot be stored in
	table $\T$.

	\label{app:lem:bad}

\end{lemma}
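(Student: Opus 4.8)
The plan is to turn the $i$-bad condition into a pair of ``crossing'' paths sharing their endpoints, and then contradict correctness through Lemma~\ref{lem:path}.

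First I would unfold the hypothesis. By Definition~\ref{def:bad}, since $\bl{e}_k$ is $i$-bad w.r.t.\ $\T$ there is some $j$ with $1 \le j \le i$ and distinct elements $\bl{u}_{k+j}, \bl{v}_{k+j} \in \U^j_\T(\bl{e}_k)$ with $\bl{v}_{k+j} \in S_{\T'}(\bl{u}_{k+j})$, where $\T' = \T$ iff $j$ is even. Applying Lemma~\ref{lem:univ} to $\bl{u}_{k+j}$ yields a path $P$ of length $j$ whose first node is $(\bl{e}_k, \cdot)_\T$ and whose last node has antecedent $\bl{u}_{k+j}$ and lies in table $\T'$ (the table in Lemma~\ref{lem:univ} coincides with the $\T'$ of Definition~\ref{def:bad} because both flip parity with $j$); applying it to $\bl{v}_{k+j}$ yields a path $Q$ of length $j$ from $(\bl{e}_k, \cdot)_\T$ to a last node with antecedent $\bl{v}_{k+j}$ in $\T'$. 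Since $\bl{u}_{k+j}$ and $\bl{v}_{k+j}$ are distinct members of the common $\T'$-set $S_{\T'}(\bl{u}_{k+j})$, and by the index constraints behind Observation~\ref{obs:path} $\bl{v}_{k+j}$ does not already occur on $P$ nor $\bl{u}_{k+j}$ on $Q$, I may take the last node of $P$ to be $(\bl{u}_{k+j}, \bl{v}_{k+j})_{\T'}$ and the last node of $Q$ to be $(\bl{v}_{k+j}, \bl{u}_{k+j})_{\T'}$. Thus the two paths cross at their ends and also share $\bl{e}_k$ at their starts.

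Next I would choose $\Su$ and $\X$. Form the graph $G$ whose vertices are the elements occurring on $P$ or $Q$ and whose edges join the two terms of each node of $P$ and of $Q$. By Observation~\ref{obs:path} the nodes of $P$ use pairwise disjoint element-pairs, so the $P$-nodes form a matching; likewise the $Q$-nodes; hence $G$ is a union of two matchings, i.e.\ a disjoint union of paths and even cycles, and in particular bipartite. A proper $2$-colouring of $G$ gives disjoint sets $\Su$ and $\X$ such that every node of $P$ and every node of $Q$ has exactly one term in $\Su$ and one in $\X$; every element involved has index between $k$ and $k+j \le k+i$. Counting the $j+1 \le i+1$ nodes of each path and discounting the three elements $\bl{e}_k, \bl{u}_{k+j}, \bl{v}_{k+j}$ that the two paths are forced to share (and which the colouring distributes in a balanced way), one checks that $|\Su|, |\X| \le 2i$.

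Finally I would derive the contradiction. Suppose block $\bl{e}$ is stored in $\T$; then $\bl{e}_k$, the antecedent of the first node of $P$, is stored in its own table $\T$, so Lemma~\ref{lem:path} applied to $P$ forces every antecedent along $P$ into its own table and keeps every consequent out of its table. In particular the consequent $\bl{v}_{k+j}$ of $P$'s last node cannot be stored in $\T'$, so block $\bl{v}$ sits in the table other than $\T'$. The same argument applied to $Q$, whose first antecedent is again $\bl{e}_k$ stored in $\T$, forces its last antecedent $\bl{v}_{k+j}$ to be stored in its own table $\T'$, i.e.\ block $\bl{v}$ sits in $\T'$ --- a contradiction. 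Hence block $\bl{e}$ cannot be stored in $\T$. I expect the delicate part to be exactly the middle step: arguing that $\Su$ and $\X$ can be taken disjoint while still certifying the one-term-in-each condition for every node of both crossing paths (the point the introduction flags), together with the bookkeeping that keeps both cardinalities at $2i$ and confines all indices to $[k, k+i]$ --- the latter being what makes the four sets assembled in Section~\ref{sec:modified} pairwise disjoint.
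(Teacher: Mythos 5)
Your proposal is correct and follows the same skeleton as the paper's proof: unfold $i$-badness into a $j \le i$ and two distinct elements $\bl{u}_{k+j}, \bl{v}_{k+j}$ of $\U^j_\T(\bl{e}_k)$ in a common $\T'$-set, invoke Lemma~\ref{lem:univ} to get two length-$j$ paths out of $\bl{e}_k$, cross-assign the consequents of the two last nodes (the paper sets $\bl{x}_{k+j}=\bl{v}_{k+j}$ and $\bl{y}_{k+j}=\bl{u}_{k+j}$ exactly as you do), and then apply Lemma~\ref{lem:path} to both paths to force block $\bl{v}$ simultaneously into and out of $\T'$, so block $\bl{e}$ cannot sit in $\T$. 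Where you genuinely depart from the paper is the step you yourself flag as delicate: the paper builds $\Su_1,\X_1,\Su_2,\X_2$ path by path and then repairs clashes (an element landing in $\Su_1\cap\X_2$) by a local swap at the offending node, dismissing further conflicts with ``can be similarly resolved''; you instead 2-colour the union of the two matchings formed by the node-pairs, which is bipartite since a union of two matchings has only paths and even cycles, and read $\Su$ and $\X$ off the colour classes. This handles all overlaps (including coinciding intermediate elements, not just the forced ones) in one stroke and is arguably more robust than the paper's repair argument, while still feeding Lemma~\ref{lem:path} correctly since Observation~\ref{obs:node} is indifferent to which term lands in $\Su$ and which in $\X$. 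One caveat: your cardinality bookkeeping is loose in the same way the paper's is --- a path of length $j$ has $j+1$ nodes, so the union can contain up to $4j+1$ elements and a proper 2-colouring need not balance them, so one class can reach $2j+1$; the paper's claim that each $\Su_1,\X_1$ has size $j$ has the same off-by-one flavour, and in both cases this only shifts constants (the relation $n=4i$ in Section~\ref{sec:modified}) without affecting the asymptotic bound.
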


\begin{proof}

	As the element $\bl{e}_k$ is $i$-bad w.r.t.\ table $\T$, then,
	according to Definition~\ref{def:bad}, there exists some $j$
	between $1$ and $i$ s.t.\ two elements of the $j$-Universe of
	$\bl{e}_k$ w.r.t.\ table $\T$ belong to the same set in table
	$\T'$. Let the two elements of $\U^j_\T(\bl{e}_k)$ be
	$\bl{u}_{k+j}$ and $\bl{v}_{k+j}$. Then Lemma~\ref{lem:univ}
	suggests that there exist two paths, both of whom have
	$\bl{e}_k$ as the antecedent of the first node and are of
	length $j$, such that the antecedent of the last node of one
	of them is $\bl{u}_{k+j}$, as
	\[
		(\bl{e}_k, \bl{f}_k)_{\T} \
			\overset{\bl{f}}{\longrightarrow} \
			\dots \
			\overset{\bl{u}}{\longrightarrow} \
			(\bl{u}_{k+j}, \bl{x}_{k+j})_{\T'},
	\]
	and the antecedent of the last node of the other is
	$\bl{v}_{k+j}$, as
	\[
		(\bl{e}_k, \bl{g}_k)_{\T} \
			\overset{\bl{g}}{\longrightarrow} \
			\dots \
			\overset{\bl{v}}{\longrightarrow} \
			(\bl{v}_{k+j}, \bl{y}_{k+j})_{\T'}.
	\]
	The notion of the table $\T'$ in Definition~\ref{def:bad} and
	Lemma~\ref{lem:univ} are identical.

	It is given to us that the elements $\bl{u}_{k+j}$ and
	$\bl{v}_{k+j}$ belong to the same set in table $\T'$, and thus
	the last nodes of the two paths belong to the same set. In
	that case, we can set the consequents of the last two nodes of the paths as
	\[
		\bl{x}_{k+j} = \bl{v}_{k+j} \ \ \textnormal{ and }
			\ \ \bl{y}_{k+j} = \bl{u}_{k+j}.
	\]

	Suppose we store the element $\bl{e}_k$ in its own table,
	namely $\T$. According to Lemma~\ref{lem:path}, for the first
	path there exist a choice of sets $\Su_1$ and $\X_1$ such that
	the antecedent $\bl{u}_{k+j}$ must be stored in table $\T'$
	and the consequent $\bl{v}_{k+j}$ cannot be stored in $\T'$.
	Furthermore, as the length of the path is $j$, each of the
	subsets are of size $j$. For the second path we have sets
	$\Su_2$ and $\X_2$, of size $j$ each, such that the antecedent
	$\bl{v}_{k+j}$ will have to stored in table $\T'$.

	Let us consider the sets
	\[
		\Su = \Su_1 \cup \Su_2 \ \ \textnormal{ and }
			\ \ \X = \X_1 \cup \X_2,
	\]
	each of their sizes being at most $2j$, as some elements might be common among the sets. Together, it will imply that
	storing $\bl{e}_k$ in table $\T$ with result in a
	contradiction w.r.t.\ the storage of the element
	$\bl{v}_{k+j}$. So, we may conclude that the element
	$\bl{e}_k$, and consequently its block $\bl{e}$, cannot be
	stored in table $\T$.

	It may happen that an element in $\Su_1$ also is in $\X_2$,
	there by violating the requirement that the sets $\Su$ and
	$\X$ must be disjoint. Let that common element be
	$\bl{w}_{k+l}$, and the nodes of the two paths be
	\[
		(\bl{w}_{k+l}, \bl{x}_{k+l})_{\T''} \ \
			\textnormal{ and } \ \
			(\bl{w}_{k+l}, \bl{y}_{k+l})_{\T''},
	\]
	where the table $\T''$ will depend on $l$. Furthermore, we
	have $\bl{w}_{k+l}$ in both $\Su_1$ and $\X_2$, $\bl{x}_{k+l}$
	in $\X_1$ and $\bl{y}_{k+l}$ in $\Su_2$. In such a scenario,
	for the second node if we put $\bl{w}_{k+l}$ in $\Su_2$ and
	$\bl{y}_{k+l}$ in $\X_2$, the resulting sets $\Su$ and $\X$
	will become disjoint. As the index $k+l$ cannot occur anywhere
	else in the paths, this change will not affect the choices of
	other nodes in the two paths.

	Any other conflicts similar to the one above can be similarly
	resolved, thereby establishing the statement of the lemma. \qed

\end{proof}

In the next lemma, we will continue to use the notations $\bl{h}_{i,j}$ and $P_\T(\bl{e}_k)$ as introduced in the proof of Lemma~\ref{lem:univ_size}.

\begin{lemma}
	The following identity holds for sums indexed by the elements of
	$\U^t_\B(\bl{e}_1)$, where $\bl{e}_1$ is assumed to be
	$t$-good --
	\begin{multline*}
		\sum_{\bl{h}_{t,t+1} \ \in \ \U^t_\B(\bl{e}_1)} x_{\bl{h}_{t,t+1}}
			\ \ = \ \ \sum_{\bl{h}_{1,1} \ \in \ P_\B(\bl{e}_1)}
			\ \ \sum_{\bl{h}_{2,2} \ \in \ P_\C(\bl{h}_{1,2})}
			\ \ \sum_{\bl{h}_{3,3} \ \in \ P_\B(\bl{h}_{2,3})} \\
			\dots
			\ \ \sum_{\bl{h}_{t-1,t-1} \ \in \ P_\T(\bl{h}_{t-2,t-1})}
			\ \ \sum_{\bl{h}_{t,t} \ \in \ P_{\T'}(\bl{h}_{t-1,t})}
			x_{\bl{h}_{t,t+1}}.
	\end{multline*}
	Here, the tables $\T$ and $\T'$ are defined as follows.
	\[ \begin{array}{cl}
		\T = \C \textnormal{ and } \T' = \B, & \textnormal{if
			$t$ is odd}; \\
		\T = \B \textnormal{ and } \T' = \C, &
			\textnormal{otherwise.}
	\end{array} \]

	\label{app:lem:identity}
\end{lemma}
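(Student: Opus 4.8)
The plan is to prove the identity by induction on $t$, unrolling the recursive definition of the $t$-universe one layer at a time. The base case $t=1$ is essentially Observation~\ref{obs:1univ} together with the definition $\U^1_\B(\bl{e}_1) = \{\bl{h}_{1,2} \mid \bl{h}_{1,1} \in S_\B(\bl{e}_1)\setminus\{\bl{e}_1\}\} = \{\bl{h}_{1,2}\mid \bl{h}_{1,1}\in P_\B(\bl{e}_1)\}$, so that $\sum_{\bl{h}_{1,2}\in\U^1_\B(\bl{e}_1)} x_{\bl{h}_{1,2}}$ is literally $\sum_{\bl{h}_{1,1}\in P_\B(\bl{e}_1)} x_{\bl{h}_{1,2}}$; here we use that $\bl{h}_{1,2}$ is determined by $\bl{h}_{1,1}$ (the index-$2$ element of the block $\bl{h}_1$), and that these index-$2$ elements are distinct as $\bl{h}_{1,1}$ ranges over the set — which holds because the set is clean (Restriction~\ref{res:simple}) so no two of its elements share a block, hence the map $\bl{h}_{1,1}\mapsto \bl{h}_1\mapsto\bl{h}_{1,2}$ is injective.

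For the inductive step, assume the identity for $t$ and consider $\U^{t+1}_\B(\bl{e}_1) = \bigcup_{\bl{u}_{l}\in \U^t_\B(\bl{e}_1)} \U^1_{\T^*}(\bl{u}_l)$, where by Lemma~\ref{lem:index} every $\bl{u}_l$ has $l = t+1$, i.e.\ $\bl{u}_l = \bl{h}_{t,t+1}$ in the present notation, and $\T^*$ is $\B$ or $\C$ according to the parity of $t+1$. Since $\bl{e}_1$ is assumed $(t+1)$-good, the sets $\U^1_{\T^*}(\bl{h}_{t,t+1})$ over distinct $\bl{h}_{t,t+1}\in\U^t_\B(\bl{e}_1)$ are pairwise disjoint (this is exactly the failure of the $(t+1)$-bad condition at level $j=t+1$), so the union is disjoint and $\sum_{\bl{v}\in\U^{t+1}_\B(\bl{e}_1)} x_{\bl{v}} = \sum_{\bl{h}_{t,t+1}\in\U^t_\B(\bl{e}_1)}\sum_{\bl{v}\in\U^1_{\T^*}(\bl{h}_{t,t+1})} x_{\bl{v}}$. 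Applying the induction hypothesis to the outer sum — with $x_{\bl{h}_{t,t+1}}$ replaced by the quantity $\sum_{\bl{v}\in\U^1_{\T^*}(\bl{h}_{t,t+1})} x_{\bl{v}}$, which is legitimate since that quantity depends only on $\bl{h}_{t,t+1}$ — expands the outer sum into the claimed $t$-fold nested form down to $\sum_{\bl{h}_{t,t}\in P_{\T'}(\bl{h}_{t-1,t})}$, and then expanding the innermost $\U^1_{\T^*}(\bl{h}_{t,t+1})$ by the base case adds the final summation $\sum_{\bl{h}_{t+1,t+1}\in P_{\T^*}(\bl{h}_{t,t+1})}$ with summand $x_{\bl{h}_{t+1,t+2}}$. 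One checks the parity bookkeeping: the table $\T^*$ attached to the $\U^1$ at depth $t+1$ matches the $\T,\T'$ alternation prescribed in the statement for $t+1$ (the tables alternate $\B,\C,\B,\dots$ with depth, and the last two are $\T,\T'$ as specified by the parity of $t+1$), which is the only place the case split in the lemma statement enters.

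The main obstacle is not the induction skeleton but the two disjointness/injectivity facts that make the unions in the definition of the $t$-universe behave like disjoint unions, so that cardinalities and sums distribute cleanly: at the innermost layer, the cleanness of sets gives injectivity of $\bl{h}_{k,k}\mapsto\bl{h}_{k,k+1}$ within a single set; across layers, it is precisely the $t$-goodness hypothesis that forbids two elements of the current universe from landing in a common set in the next table, which is what prevents the one-layer expansion $\sum_{\bl{v}\in\bigcup\U^1} = \sum\sum_{\bl{v}\in\U^1}$ from overcounting. I would state these two facts as the crux of the step, reference Restriction~\ref{res:simple} and Definition~\ref{def:bad} respectively, and keep the remaining manipulation — pushing one summation through and relabelling indices $k\mapsto k$, $k+1\mapsto k+1$ — as routine. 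The case $t$ even versus $t$ odd requires no separate argument beyond tracking which of $\B,\C$ sits at each depth, which the statement already encodes in its definition of $\T,\T'$.
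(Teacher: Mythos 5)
Your proposal is correct and follows essentially the same route as the paper's proof: induction on $t$, with the base case read off from Definition~\ref{def:univ}, the union defining $\U^{t+1}_\B(\bl{e}_1)$ made disjoint by the goodness hypothesis, and the induction hypothesis applied with the inner sum $\sum_{\bl{v}\in\U^1_{\T^*}(\bl{h}_{t,t+1})}x_{\bl{v}}$ playing the role of the summand. One small correction: the disjointness you invoke is the failure of the bad condition at level $j=t$ (distinct elements of $\U^t_\B(\bl{e}_1)$ lying in distinct sets of the next table), not at level $j=t+1$; since $(t+1)$-goodness rules out bad behaviour at every level $1\leq j\leq t+1$, your conclusion stands, but the parenthetical attribution is off by one.
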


\begin{proof}

	We will prove the identity by induction on $t$. For the base
	case, when $t = 1$, we will establish the following --
	\[
		\sum_{\bl{h}_{1,2} \ \in \ \U^1_\B(\bl{e}_1)}
			x_{\bl{h}_{1,2}}
			\ \ = \ \ \sum_{\bl{h}_{1,1} \ \in \
				P_\B(\bl{e}_1)} x_{\bl{h}_{1,2}}.
	\]

	According to Lemma~\ref{lem:index}, the index of the elements
	in $\U^1_\B(\bl{e}_1)$ will be 2. If an element $\bl{h}_{1,2}$
	belongs to that universe, then from the definition of the
	universes of elements (Definition~\ref{def:univ}) we can
	deduce that $\bl{h}_{1,1} \in P_\B(\bl{e}_1)$. The identity
	consequently follows.

	Let us now assume that the identity holds for $t = k$, where
	$k$ is odd --
	\begin{multline*}
		\sum_{\bl{h}_{k,k+1} \ \in \ \U^k_\B(\bl{e}_1)}
			x_{\bl{h}_{k,k+1}}
			\ \ = \ \ \sum_{\bl{h}_{1,1} \ \in \ P_\B(\bl{e}_1)}
			\ \ \sum_{\bl{h}_{2,2} \ \in \ P_\C(\bl{h}_{1,2})}
			\ \ \sum_{\bl{h}_{3,3} \ \in \ P_\B(\bl{h}_{2,3})} \\
			\dots
			\ \ \sum_{\bl{h}_{k-1,k-1} \ \in \ P_\C(\bl{h}_{k-2,k-1})}
			\ \ \sum_{\bl{h}_{k,k} \ \in \ P_{\B}(\bl{h}_{k-1,k})}
			x_{\bl{h}_{k,k+1}}.
	\end{multline*}
	In the expression above $\bl{e}_1$ is $k$-good. We then see
	what happens in the case of $(k+1)$-universe with $\bl{e}_1$
	being $(k+1)$-good.
	\[
		\sum_{\bl{h}_{k+1,k+2} \ \in \ \U^{k+1}_\B(\bl{e}_1)}
			x_{\bl{h}_{k+1,k+2}}
			\ \ = \ \ \sum_{\bl{h}_{k+1,k+2} \ \in \
				\bigcup\limits_{\bl{h}_{k,k+1} \ \in \
				\U^k_\B(\bl{e}_1)} \U^1_\C(\bl{h}_{k,k+1})}
				x_{\bl{h}_{k+1,k+2}}
	\]
	The R.H.S.\ comes from Definition~\ref{def:univ}. As the
	element $\bl{e}_1$ is $(k+1)$-good, and consequently $k$-good,
	the elements of $\U^k_\B(\bl{e}_1)$ all belong to distinct
	sets in table $\C$, and hence all the 1-universes computed
	w.r.t.\ table $\C$ are disjoint from each other. Thus we can
	rewrite the R.H.S.\ as follows.
	\begin{align*}
		\sum_{\bl{h}_{k+1,k+2} \ \in \ \U^{k+1}_\B(\bl{e}_1)}
			x_{\bl{h}_{k+1,k+2}}
			\ \ &= \ \ \sum_{\bl{h}_{k,k+1} \ \in \
				\U^k_\B(\bl{e}_1)} \left(
				\sum_{\bl{h}_{k+1,k+2} \ \in \
				\U^1_\C(\bl{h}_{k,k+1})} x_{\bl{h}_{k+1,k+2}}
				\right) \\
			&= \ \ \sum_{\bl{h}_{k,k+1} \ \in \
				\U^k_\B(\bl{e}_1)} \left(
				\sum_{\bl{h}_{k+1,k+1} \ \in \
				P_\C(\bl{h}_{k,k+1})} x_{\bl{h}_{k+1,k+2}}
				\right)
	\end{align*}
	The R.H.S.\ is a sum indexed by the elements of
	$\U^k_\B(\bl{e}_1)$, and by mathematical induction we can
	rewrite it as --
	\begin{multline*}
		\sum_{\bl{h}_{k+1,k+2} \ \in \ \U^{k+1}_\B(\bl{e}_1)}
			x_{\bl{h}_{k+1,k+2}}
			\ \ = \ \ \sum_{\bl{h}_{1,1} \ \in \ P_\B(\bl{e}_1)}
			\ \ \sum_{\bl{h}_{2,2} \ \in \ P_\C(\bl{h}_{1,2})}
			\ \ \sum_{\bl{h}_{3,3} \ \in \ P_\B(\bl{h}_{2,3})} \\
			\ \ \dots
			\ \ \sum_{\bl{h}_{k-1,k-1} \ \in \ P_\C(\bl{h}_{k-2,k-1})}
			\ \ \sum_{\bl{h}_{k,k} \ \in \ P_{\B}(\bl{h}_{k-1,k})}
			\left( \sum_{\bl{h}_{k+1,k+1} \ \in \
				P_\C(\bl{h}_{k,k+1})} x_{\bl{h}_{k+1,k+2}}
				\right).
	\end{multline*}

	The case when $k$ is even can be similarly argued. \qed

\end{proof}

\end{document}